\newcommand{\CSP}[0]{\ensuremath{\textrm{CSP}}}
\newcommand{\MP}[0]{\ensuremath{\textrm{MP}}}
\newcommand{\Obs}[0]{\ensuremath{\textrm{Obs}}}
\newcommand{\Cat}[0]{\ensuremath{\textrm{Cat}}}
\newcommand{\NP}[0]{\ensuremath{\textrm{NP}}}
\newcommand{\polynomial}[0]{\ensuremath{\textrm{P}}}
\newcommand{\NPc}[0]{\ensuremath{\textrm{NP-complete}}}
\newcommand{\symb}[1]{\ensuremath{{#1}}}
\newcommand{\structure}[1]{\ensuremath{\mathbf{#1}}}
\newcommand{\matrice}[1]{\ensuremath{\mathsf{#1}}}
\newcommand{\tuple}[1]{\ensuremath{\mathbf{#1}}}
\newcommand{\rel}[1]{\ensuremath{#1}}
\newcommand{\family}[1]{\ensuremath{\mathcal{#1}}}
\newcommand{\fo}{\ensuremath{\mathsf{FO}}}
\newcommand{\true}{\ensuremath{\mathsf{true}}}
\newcommand{\VCSP}[0]{\ensuremath{\textrm{VCSP}}}
\newtheorem{theorem}{Theorem}[section]
\newtheorem{proposition}[theorem]{Proposition}
\newtheorem{lemma}[theorem]{Lemma}
\newtheorem*{claim}{Claim}
\newtheorem{corollary}[theorem]{Corollary}
\theoremstyle{definition}
\newtheorem{definition}[theorem]{Definition}
\newtheorem{example}{Example}
\theoremstyle{remark}
\newtheorem*{remark}{Remark}
\newtheorem{observation}[theorem]{Observation}
\Crefname{lemma}{Lemma}{Lemmas}
\Crefname{proposition}{Proposition}{Propositions}
\crefname{observation}{observation}{observations}
\begin{document}



\title[Generalisations of Matrix Partitions]{Generalisations of Matrix Partitions : Complexity and Obstructions}

\author[A. Barsukov]{Alexey Barsukov}
\address{Faculty of Mathematics and Physics, Charles University, 
Sokolovsk\'a 49/83, Prague, 186 75, Czechia}
\email{alexey.barsukov@matfyz.cuni.cz}

\author[M. M. Kant\'e]{Mamadou Moustapha Kant\'e}
\address{Universit\'e Clermont Auvergne, Clermont Auvergne INP, LIMOS, CNRS, 
1 Rue de la Chebarde, Aubi\`ere, 63178, France}
\email{mamadou.kante@uca.fr}
\urladdr{https://perso.isima.fr/~makante/}
\thanks{This work was mainly done when A. Barsukov was PhD student at Universit\'e Clermont Auvergne, Clermont Auvergne INP, LIMOS, CNRS, F-63000
  Clermont-Ferrand France, and was partially supported by the grant from the French National Research Agency under PRC program (DIFFERENCE: ANR-20-CE48-0002). A. Barsukov is supported by the European Union (ERC,
  POCOCOP, 101071674). Views and opinions expressed are however those of the author(s) only and do not necessarily reflect those of the European Union or the
  European Research Council Executive Agency. Neither the European Union nor the granting authority can be held responsible for them.
M.M. Kant\'e is supported by the grant from the French National Research Agency under JCJC program (ASSK: ANR-18-CE40-0025-01).}

\keywords{matrix partition, csp, dichotomy theorem, complexity}

\begin{abstract}
A \emph{trigraph} is a graph where each pair of vertices is labelled either $0$ (a non-arc), $1$ (an arc) or $\star$ (both an
arc and a non-arc). In a series of papers, Hell and co-authors (see for instance [Pavol Hell: Graph partitions with prescribed
patterns. Eur. J. Comb. 35: 335-353 (2014)]) proposed to study the complexity of homomorphisms from graphs to trigraphs, called \emph{Matrix Partition
  Problems}, where arcs and non-arcs can be both mapped to $\star$-arcs, while a non-arc cannot be mapped to
an arc, and vice-versa. Even though Matrix Partition Problems are generalisations of \emph{Constraint Satisfaction Problems
  ($\CSP$s)}, they share with them the property of being ``intrinsically'' combinatorial. So, the question of a possible
$\polynomial$-time vs $\NPc$ dichotomy is a very natural one and was raised in Hell et al.'s papers. We propose a
  generalisation of Matrix Partition Problems to relational structures and study them with respect to the question of a dichotomy. We first show that trigraph homomorphisms and Matrix Partition Problems are $\polynomial$-time equivalent, and then prove that one can also restrict
(with respect to having a dichotomy) to relational structures with a single relation. Failing in proving that Matrix Partition Problems on directed graphs are not $\polynomial$-time
equivalent to Matrix Partitions on relational structures, we give some evidence that it might be unlikely by formalising the
  reductions used in the case of $\CSP$s and by showing that such reductions cannot work for the case of Matrix Partition Problems. We
  then turn our attention to Matrix Partition problems that can be described by finite sets of (induced-subgraph) obstructions. We show, in particular, that any such problem has finitely many minimal obstructions if and only if it has finite duality.  We conclude by showing that on trees (seen as trigraphs) it is $\NPc$ to decide whether a given tree has a homomorphism to another input trigraph. The latter shows a notable difference on tractability between $\CSP$ and Matrix Partition Problems as it is well-known that $\CSP$ is tractable on the class of trees.
\end{abstract}
\maketitle








\section{Introduction}

Ladner showed in~\cite{ladner1975} that, under the assumption $\polynomial \not= \NP$, there exist problems that are neither in $\polynomial$ nor  $\NPc$. This raised the following question: which subclasses of $\NP$ admit a  \emph{P vs NP-complete dichotomy}\footnote{In the rest of the article, we omit ``$\polynomial$ vs $\NPc$'' because we study only this particular dichotomy.}, i.e., every problem of the class is either in $\polynomial$  or is $\NPc$.  For instance, Schaefer proved in his seminal paper~\cite{schaefer1978} that every Boolean $\CSP$ admits a dichotomy. Hell and Ne{\v s}et{\v r}il~\cite{hellnesetril1990} showed a similar dichotomy for homomorphism problems on  undirected graphs. The class of \emph{Constraint Satisfaction Problems} ($\CSP$s for short) is usually described (see~\cite{hellnesetril}) as the family of decision problems $\CSP(\structure{A})$, for every finite relational structure $\structure{A}$, that checks the existence of a homomorphism to $\structure{A}$ from a given input structure. Since general $\CSP$s are generalisations of both Boolean $\CSP$s and homomorphism problems on undirected graphs, researchers wondered whether a dichotomy can hold for them as well. This question, also known as the \emph{$\CSP$ conjecture}, was stated by Feder and Vardi in~\cite{federvardi1998}.
For around two decades, the $\CSP$ conjecture was verified for many special cases, see for instance~\cite{federvardi1998,bulatov2006,grohe2007}, but, more importantly, its study brings many mathematical tools in studying algorithmic and complexity questions, in particular, the algebraic tools~\cite{bulatov2005}.    Recently, Bulatov~\cite{bulatov2017} and Zhuk~\cite{zhuk2020} independently answered in the affirmative the $\CSP$ conjecture.

Motivated by the $\CSP$ conjecture, many homomorphism type problems  have been introduced and studied under the realm of a dichotomy, e.g., \emph{full homomorphism}~\cite{ballpultrnesetril2010}, \emph{locally injective/surjective homomorphism}~\cite{macgillivray2010, bodirsky_surjective_2012}, \emph{list homomorphism}~\cite{hell_list_2011}, \emph{quantified $\CSP$}~\cite{zhukmartin2020,zhukmartin2022}, \emph{infinite $\CSP$}~\cite{bodirsky2018,bodirsky_madelaine_mottet_article}, \emph{$\VCSP$}~\cite{kolmogorov2013}, etc. In this paper, we are interested in the \emph{Matrix Partition} Problem introduced in~\cite{federhellkleinmotwani2003} which finds its origin in combinatorics as other variants of the $\CSP$ conjecture, e.g., list or surjective homomorphism. 
A \emph{trigraph} is a pair $\structure{G}=(G,\rel{E}^\structure{G})$ where $\rel{E}^\structure{G}\colon G^2\to\{0,1,\star\}$. A \emph{homomorphism} between two
trigraphs $\structure{G}$ and $\structure{H}$ is a mapping $h\colon G\to H$ such that, for all $(x,y)\in G^2$,
$\rel{E}^\structure{H}(h(x),h(y))\in\{\rel{E}^\structure{G}(x,y),\star\}$, see~\cite{hellnesetriltrigraph2007}. As any graph is a trigraph, Hell et
al. (\cite{federhellkleinmotwani2003,federhell2007,hell2014}) proposed a way to consider combinatorial problems on graphs as trigraph homomorphism problems, and
called them \emph{Matrix Partition Problems}\footnote{The term \emph{Matrix Partition Problem} is a natural one because any trigraph can be represented by a
  matrix where each entry is in $\{0,1,\star\}$, and a trigraph homomorphism is a partition problem where the arcs between two parts $V_i$ and
  $V_j$ are controlled by the entry of the matrix on $(i,j)$.}. Particularly, any $\CSP$ problem on (directed) graphs can be represented as a Matrix Partition
Problem, thus the latter is a generalisation of the class $\CSP$. Motivated by the $\CSP$ conjecture, and the similarity of Matrix Partition Problems with
$\CSP$s, Hell et al.~\cite{federhell2007,hell2014} ask whether Matrix Partition Problems may satisfy a similar dichotomy as $\CSP$s.  

There are several ways to generalise Matrix Partition Problems. 
Such problems were originally defined for (directed) graphs only.
Therefore, similarly to $\CSP$, one can study Matrix Partitions over arbitrary finite relational signatures.
Another way comes from the mismatch of the types of the input (graph), and of the target (trigraph) in Matrix Partition Problems as originally defined in~\cite{federhellkleinmotwani2003}, while in $\CSP$ these types are the same.
Therefore, attempting to make these types similar, as it is for $\CSP$, we propose to consider trigraphs in the input as well.
As every graph is also a trigraph, these new problems can be seen as the old ones, where the input is extended by adding all finite trigraphs.
For every trigraph $\structure{H}$, let us denote by $\MP(\structure{H})$ the Matrix Partition Problem as defined originally in~\cite{federhellkleinmotwani2003}, i.e., inputs are graphs, and let us denote by $\MP_\star(\structure{H})$ the new Matrix Partition problem defined here, where inputs have the same type as $\structure{H}$, i.e., are trigraphs.
The class of problems of the form $\MP(\structure{H})$ (resp. $\MP_\star(\structure{H})$), for every trigraph $\structure{H}$, is denoted by $\MP$ (resp. $\MP_\star$).
Hell and Ne{\v s}et{\v r}il proved in~\cite{hellnesetriltrigraph2007} that the problems $\MP(\structure{H})$ and $\MP_\star(\structure{H})$ are $\polynomial$-time equivalent for every trigraph $\structure{H}$.
In particular, this implies that the classes $\MP$ and $\MP_\star$ agree on having the dichotomy.
While their proof uses probabilistic reductions, we provide a deterministic proof of this statement in Section~\ref{sec:mpstar=mp}.
For every input trigraph, we replace each element $x$ with a large enough set $V_x$ of new elements; for a 0-labelled or 1-labelled arc $(x,y)$ of the input, we assign the same label to every new arc from $V_x\times V_y$; and, for a $\star$-labelled arc $(x,y)$ of the input, the new arcs from $V_x\times V_y$ are labelled according to the distribution of $1$s and $-1$s in some large enough \emph{Hadamard matrix}~\cite{frankl1988}. 
\emph{Hadamard matrices} are over $\{1,-1\}$ and they have the property that any sufficiently large submatrix is  \emph{not monochromatic}, i.e., it must contain at least one $1$ and at least one $-1$. 
This property is related to a well-known \emph{expansion} property~\cite{hoory2006}. 
It has to be said that instead of Hadamard matrices one can use any graph with good expansion property and also achieve the same result for $\MP$ and $\MP_\star$.

 Feder and Vardi in~\cite{federvardi1998} showed that every $\CSP$ over a finite signature is $\polynomial$-time equivalent to a $\CSP$ on directed graphs. Bulin et al. in~\cite{bulin2015} gave a more detailed proof of this fact and showed that all the reductions are log-space. In \Cref{sec:arity}, we ask whether similar reductions exist for Matrix Partition Problems. Using the equivalence between $\MP$ and $\MP_\star$ achieved in \Cref{sec:mpstar=mp}, we show that every problem in $\MP$ over any finite signature is $\polynomial$-time equivalent to a problem in $\MP$ over relational signatures with a single relation symbol. 
 
 We then turn our attention to the $\polynomial$-time equivalence between $\MP$ on relational structures with a single relation to $\MP$ on directed graphs. While we think that, contrary to the $\CSP$ case, $\MP$ on relational structures is richer than $\MP$ on directed graphs, we fail to prove it. Instead, we analyse the type of reductions used in the $\CSP$ case~\cite{federvardi1998,bulin2015} and show that it is unlikely that such reductions work for $\MP$, unless $\MP$ is contained in $\CSP$ modulo $\polynomial$-time equivalence. In order to show this, we introduce another generalisation of Matrix Partition Problems, denoted by $\MP_\varnothing$. We first reduce every problem in $\MP$ to a $\CSP$ problem by identifying for each tuple whether it is labelled $1$ or $0$ (we introduce for each relation $\rel{R}$ two relations $\rel{R}_0$, for $0$-labelled tuples, and $\rel{R}_1$ for $1$-labelled tuples). Therefore, every $\MP$ problem is a $\CSP$ problem, but restricted to  ``complete structures'', i.e., any tuple should be either in  $\rel{R}_0$ or in $\rel{R}_1$. When we relax this completeness property, we obtain the class of problems $\MP_\varnothing$, where we introduce a new value for tuples, namely $\varnothing$, which can be mapped to any value among $\{0,1,\star\}$. Firstly, we show in \Cref{sec:fromemptysettocsp} that, for every problem in $\MP_\varnothing$, there is a $\polynomial$-time equivalent problem in $\CSP$. We later use this result to show in \Cref{sec:arity} that any reduction similar to the one from~\cite{federvardi1998,bulin2015} cannot prove the $\polynomial$-time equivalence between $\MP$ over any finite signature and $\MP$ on directed graphs, unless every $\MP$ problem is  $\polynomial$-time equivalent to a problem in $\CSP$.

 A natural way to prove that a problem is in $\polynomial$ is to show that it is described by a finite set of obstructions. In the case of $\CSP$, $\family{F}$
 is called a \emph{duality set} for $\CSP(\structure{H})$ if, for every structure $\structure{G}$, there is no homomorphism from $\structure{G}$ to $\structure{H}$ if and only if there is $\structure{F}\in \family{F}$ such that $\structure{F}$ homomorphically maps to $\structure{G}$. It is known that
 $\CSP(\structure{H})$ has a finite duality set if and only if it is definable by a \emph{first-order formula}~\cite{atserias2008}. Feder, Hell, and
 Xie proposed in~\cite{federhell2007} to study Matrix Partition Problems with finite sets of (induced-subgraph) obstructions, i.e., a
 graph admits a partition if and only if it does not have an induced subgraph that belongs to a finite family of forbidden graphs. They proposed a necessary
 (but not sufficient) condition for a matrix $\matrice{M}$ to have finitely many obstructions. Later, Feder, Hell, and Shklarsky showed in
~\cite{federhellsplit2014} that any Matrix Partition Problem has finitely many obstructions if the input consists only of split graphs. In
 \Cref{sec:obstructions}, we show that $\MP(\structure{H})$ has finitely many obstructions if and only if so does
   $\MP_\star(\structure{H})$. We also define duality sets for Matrix Partition Problems, and show that, for $\MP$ and
   $\MP_\star$ problems, the property of having a finite duality set and the property of having a finite set of obstructions are two equivalent notions.

Apart from it, we study how the finiteness of obstruction sets for $\CSP$s is related to the finiteness for trigraphs. We demonstrate that if $\MP_\varnothing(\structure{H})$ (that is, in fact, a $\CSP$, see \Cref{sec:fromemptysettocsp}) has a finite set of obstructions, then $\MP(\structure{H})$  also has a finite set of obstructions. We show that the other direction is false by giving an example of a $\star$-graph $\structure{H}$ such that $\MP_\star(\structure{H})$ has finitely many obstructions and $\MP_\varnothing(\structure{H})$ has an infinite set of obstructions.

We finally consider tractability questions.
There is another type of $\CSP$ problems, where each problem is described by a class $\family{C}$ of relational structures such that the input consists of some $\structure{A}\in\family{C}$ and an arbitrary finite structure $\structure{B}$, and the goal is to decide whether there is a homomorphism from $\structure{A}$ to $\structure{B}$. Grohe showed in~\cite{grohe2007} for relational signatures of bounded arity, that any such problem is solvable in $\polynomial$-time
if and only if all the structures from $\family{C}$ have bounded tree-width. We show that the analogous problem is $\NPc$ for the case of Matrix
Partition Problems, even when $\family{C}$ consists only of trees, by reducing $3$-SAT to it.

\medskip

{\bf Outline.} Necessary definitions are given in \Cref{sec:preliminaries}, and the $\polynomial$-time equivalence between $\CSP$ and the class
  $\MP_\varnothing$ is shown in \Cref{sec:fromemptysettocsp}. The $\polynomial$-time equivalence of the class $\MP$ and
  the class $\MP_\star$ is shown in \Cref{sec:mpstar=mp}. We prove in \Cref{sec:arity} the $\polynomial$-time equivalence
      between $\MP$ over one-relational signatures and $\MP$ over any signature. We also provide evidence against the $\polynomial$-time equivalence between
    the class $\MP$ on directed graphs and the class $\MP$ over any signature. \Cref{sec:obstructions} covers the finiteness for the
    obstruction sets. We discuss with some remarks in \Cref{sec:tractability}, the potential utility of tree-width for the $\MP$ problems.

\section{Preliminaries}\label{sec:preliminaries}

We denote by $\mathbb{N}$ the set of nonnegative integers and, for  $n\in \mathbb{N}$, we let $[n]$ be $\{1,\ldots, n\}$. Let $V$ be a set. The cardinality of $V$ is denoted by $|V|$ and its power set is denoted by $2^V$. For a positive integer $k$, elements (tuples) of $V^k$ are often represented by boldface lower case letters (e.g., $\tuple{t}$), and the $i$-th coordinate of a tuple $\tuple{t}$ is denoted by $t_i$. If $f:V\to A$ is a mapping from $V$ to a set $A$, we denote by $f(\tuple{t})$ the tuple $(f(t_1),\ldots,f(t_k))$, and by $f(X)$, for $X\subseteq V$, the set $\{f(x)\mid x\in X\}$.


Our graph terminology is standard, see for instance~\cite{Diestel12}. In this paper, we deal mostly with labelled complete relational structures, i.e., each relation of arity $k$ is $V^k$, and tuples are labelled by the elements of a partially ordered set~\cite{stanley_enumerative_combinatorics}.

\begin{definition}[$(\symb{*},\sigma)$-structures]
A \emph{signature} $\sigma$ is a set $\{\rel{R}_1,\ldots,\rel{R}_n\}$, each $\rel{R}_i$ has arity $k_i\in \mathbb{N},\ i\in [n]$.

Let $(P_*, \preceq_*)$ be a partially ordered set (poset). A \emph{$(\symb{*},\sigma)$-structure} is a tuple $\structure{G} := (G; \rel{R}_1^\structure{G},\ldots,\rel{R}_n^\structure{G})$, where $G$ is a finite set and, for each $i\in [n]$,
$\rel{R}_i^\structure{G}\colon G^{k_i} \to P_*$
is interpreted as a mapping to the elements of the poset $(P_*, \preceq_*)$.
\end{definition}

We will always denote a $(\symb{*},\sigma)$-structure by a boldface capital letter, e.g. $\structure{A}$, and its domain by the same letter in plain font, e.g. $A$. It is worth mentioning that the notion of $(\symb{*},\sigma)$-structure is different from the one in \emph{universal algebra}, where in the latter case the functional symbol $\rel{R}_i$ is interpreted in $\structure{G}$ as a function from $G^{k_i}\to G$. 

For a $(\symb{*},\sigma)$-structure $\structure{G}$ and $X\subseteq G$, the substructure of  $\structure{G}$ induced by $X$ is the $(\symb{*},\sigma)$-structure $\structure{G'}$ with domain $G'=X$ and, for $\rel{R}\in \sigma$ of arity $k$ and $\tuple{t}\in X^k$, $\rel{R}^{\structure{G'}}(\tuple{t})=\rel{R}^{\structure{G}}(\tuple{t})$. We denote by $\structure{G}\setminus X$ the substructure of $\structure{G}$ induced by $G\setminus X$.

We now extend the notion of homomorphism between relational structures to $(\symb{*},\sigma)$-structures, the difference being the ability to map a tuple to a ``greater'' one.

\begin{definition}[homomorphism for $(\symb{*},\sigma)$-structures] For two $(\symb{*},\sigma)$-structures $\structure{G}$ and $\structure{H}$, a  mapping $h\colon G\to H$ is called a \emph{homomorphism from $\structure{G}$ to $\structure{H}$} if, for each $\rel{R} \in \sigma$ of arity $k$, and $\tuple{t}\in G^{k}$, we have that
$\rel{R}^\structure{G}(\tuple{t}) \preceq_\symb{*} \rel{R}^\structure{H}(h(\tuple{t}))$.

As usual, we will write  $h\colon \structure{G} \to \structure{H}$ to mean that $h\colon G\to H$ is a homomorphism from $\structure{G}$ to $\structure{H}$. We say that $h:\structure{G}\to \structure{H}$ is  \emph{surjective} (resp. \emph{injective}) if $h:G\to H$ is surjective (resp. injective).
\end{definition}

We can now explain how the notion of homomorphism between  $(\symb{*},\sigma)$-structures subsumes the usual ones. First, let us introduce the  partial orders that we consider in this paper. See Figure~\ref{fig:posets} for their Hasse diagrams.
\begin{itemize}
    \item $(P_{01}, \preceq_{01})$, where $P_{01}=\{0,1\}$ and $\preceq_{01}$ is the empty order with $0$ and $1$ incomparable.
    
    \item $(P_\CSP, \preceq_\CSP)$, where $P_\CSP=\{0,1\}$ and $\preceq_\CSP$ is a total order with $0\preceq_\CSP 1$. 
    
    \item $(P_\star, \preceq_\star)$, where $P_\star=\{0,1,\star\}$ and $\preceq_{\star}$ is the poset with $0\preceq_{\star} \star$ and $1\preceq_{\star} \star$, and $0$ is incomparable with $1$.
    
    \item $(P_\varnothing, \preceq_\varnothing)$ where $P_\varnothing=\{\varnothing,0,1,\star\}$ and $\preceq_{\varnothing}$ is the poset with $\varnothing\preceq_\varnothing 0\preceq_\varnothing \star$ and $\varnothing \preceq_\varnothing 1 \preceq_\varnothing \star$, and $0$ is incomparable with $1$.
\end{itemize}

\begin{figure}
\centering
\includegraphics[scale=1]{./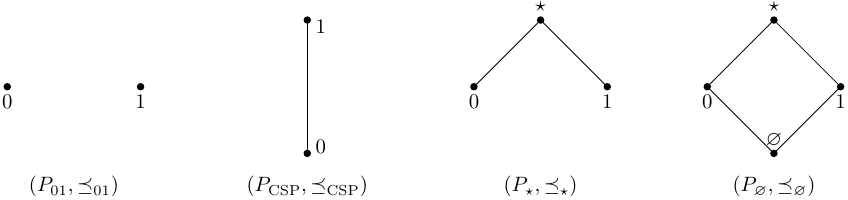}
\caption{Hasse diagrams of the four posets.}
\label{fig:posets}
\end{figure}

\begin{remark}
If the signature $\sigma$ is clear from the context,  then we will just write \emph{$*$-structure} instead of $(\symb{*},\sigma)$-structure, for $*\in \{01,\star,\varnothing\}$. Also, if $\sigma = \{E(\cdot,\cdot)\}$, then we will write \emph{$*$-graph} instead.  Finally, we will talk about \emph{relational $\sigma$-structures} and \emph{directed graphs}, instead of $(\symb{\CSP},\sigma)$-structures and $\symb{\CSP}$-graphs. Furthermore, for any tuple (arc) $\tuple{t}\in A^k$ of a $*$-structure ($*$-graph) $\structure{A}$ corresponding to a symbol $R\in\sigma$ that is clear from the context, we will call $\tuple{t}$ a \emph{$v$-tuple ($v$-arc)} if $\rel{R}^\structure{A}(\tuple{t})=v$ for some element $v$ of the poset $(P_*, \preceq_*)$.
\end{remark}

It is not hard to check that the notion of $(\symb{\CSP},\sigma)$-structures is equivalent to the usual notion of relational $\sigma$-structures, and
homomorphisms between $(\symb{\CSP},\sigma)$-structures are equivalent to usual homomorphisms. Notice that homomorphisms between
$(\symb{01},\sigma)$-structures are exactly full homomorphisms on relational structures. The following immediately follows from the definitions.


\begin{proposition}\label{prop:subposet}
Let $(P_*, \preceq_*)$ and $(P_{*'}, \preceq_{*'})$ be two posets, with $(P_*, \preceq_*)$ a subposet of $(P_{*'}, \preceq_{*'})$. Then, every $(\symb{*},\sigma)$-structure is also a $(\symb{*'},\sigma)$-structure, for any $\sigma$.
\end{proposition}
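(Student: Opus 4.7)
The plan is simply to unpack the definition of ``sub-poset'' and observe that nothing in the definition of a $(\symb{*},\sigma)$-structure forces the codomain of the relations to be tight. Recall that $(P_*,\preceq_*)$ being a sub-poset of $(P_{*'},\preceq_{*'})$ means $P_*\subseteq P_{*'}$ and that $\preceq_*$ is exactly the restriction of $\preceq_{*'}$ to $P_*\times P_*$. In particular, the set-theoretic inclusion $\iota\colon P_*\hookrightarrow P_{*'}$ is well defined.

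Now let $\structure{G}=(G;\relation{R}_1^\structure{G},\ldots,\relation{R}_n^\structure{G})$ be an arbitrary $(\symb{*},\sigma)$-structure, where $\relation{R}_i$ has arity $k_i$. Each relation interpretation is a map $\relation{R}_i^\structure{G}\colon G^{k_i}\to P_*$. Post-composing with $\iota$ yields a map $\iota\circ \relation{R}_i^\structure{G}\colon G^{k_i}\to P_{*'}$, which (since $\iota$ is just the inclusion) agrees with $\relation{R}_i^\structure{G}$ on every tuple. Hence the very same data $(G;\relation{R}_1^\structure{G},\ldots,\relation{R}_n^\structure{G})$ satisfies the definition of a $(\symb{*'},\sigma)$-structure, which is what had to be shown.

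There is no real obstacle here: the statement is definitional in nature, and the only thing to check is that no condition of the definition of $(\symb{*'},\sigma)$-structure refers to the poset structure beyond requiring each $\relation{R}_i^\structure{G}$ to be a map from $G^{k_i}$ into the underlying set of the ambient poset. If later one wished to compare homomorphisms between the two viewpoints, one would use the fact that $\preceq_*$ is the restriction of $\preceq_{*'}$ so that any homomorphism in the $(\symb{*},\sigma)$ sense is also one in the $(\symb{*'},\sigma)$ sense, but that stronger claim is not required by the present proposition.
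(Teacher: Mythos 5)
Your proof is correct and essentially the argument the paper has in mind: the paper states Proposition~\ref{prop:subposet} without proof, treating it as immediate from the definitions, and your unpacking (relation interpretations land in $P_*\subseteq P_{*'}$, so the same data qualifies as a $(\symb{*'},\sigma)$-structure) is exactly that omitted observation. Your closing remark that the sub-poset condition is what makes homomorphisms transfer as well is also accurate and is in fact the reason the paper needs the poset, not just the underlying set, to be compatible.
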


Particularly, for any $\sigma$, every $(\symb{{01}},\sigma)$-structure is a $(\symb{\star},\sigma)$-structure, and every $(\symb{\star},\sigma)$-structure is a $(\symb{\varnothing},\sigma)$-structure. For $*\in\{01,\star,\varnothing\}$, we denote by $\Cat_*$ the set of
all $(\symb{*},\sigma)$-structures\footnote{We use the notation $\Cat_*$ because one can use $(\symb{*},\sigma)$-structures as objects and 
  homomorphisms as arrows to make a category.}. From Proposition~\ref{prop:subposet} and the definitions of $(P_{01},\preceq_{01}),(P_\star,\preceq_\star)$, and $(P_\varnothing,\preceq_\varnothing)$, we have the following inclusion: $\Cat_{01}^\sigma\subset \Cat_\star^\sigma \subset \Cat_\varnothing^\sigma.$

We can now define the homomorphism problems, that we restrict for conciseness to the four posets from Figure~\ref{fig:posets}.

\begin{definition}[Generalised Matrix Partition]
Let $\sigma$ be a finite signature and $*$ be in $\{01,\star,\varnothing\}$. For a $(\star,\sigma)$-structure $\structure{H}$, the problem $\MP^\sigma_*(\structure{H})$
denotes the set of all $*$-structures $\structure{G}$ such that there exists a homomorphism $h\colon\structure{G}\to\structure{H}$. We always omit subscript $01$ in $\MP_{01}^\sigma(\structure{H})$ and write $\MP^\sigma(\structure{H})$ instead.
\end{definition}

For a signature $\sigma$, $\MP^\sigma$, $\MP_\star^\sigma$, and $\MP_\varnothing^\sigma$ denote the classes of problems $\MP^\sigma(\structure{H})$,
$\MP_\star^\sigma(\structure{H})$, and $\MP_\varnothing^\sigma(\structure{H})$ respectively, for all $(\star,\sigma)$-structures $\structure{H}$. If $\sigma = \{\rel{E}(\cdot,\cdot)\}$ -- the directed graph signature, then we will omit the $\sigma$-superscript and will just write $\MP$, $\MP_\star$, and $\MP_\varnothing$.

If $\structure{H}$ is
a relational $\sigma$-structure, then we write  $\CSP^\sigma(\structure{H})$ for the set of all relational $\sigma$-structures $\structure{G}$ such that there exists a homomorphism
$h\colon\structure{G}\to\structure{H}$. 


We now give the original definition of Matrix Partition Problems given by Feder et al. in~\cite{federhell2007}. Let $\matrice{M}$ be an $n\times n$-matrix with entries from $\{0,1,\star\}$.  A graph $\structure{G}$ admits an \emph{$\matrice{M}$-partition} if there is a function $m:G\to [n]$ such that, for all \emph{distinct} $x,y\in G$, $\rel{E}^{\structure{G}}(x,y)\preceq_\star \matrice{M}[m(x),m(y)]$.

\begin{remark}
The definition from~\cite{federhell2007} and our definition of $\MP(\structure{H})$ are not the same. Unlike Feder et al., we consider all possible graphs in the input, not only the loopless ones. This implies that we do not need to require that $x,y\in G$ must be distinct to satisfy the condition of Matrix Partition. We decided to use our definition because it can be generalised better. One of the reasons is the ambiguity of what it means to be ``distinct'' when the arity is greater than 2: it may be ``pairwise distinct'' or ``not all equal''. Another reason is that, in our definition, homomorphisms are transitive, so one can consider $\star$-structures as objects of the category $\Cat_\star$, where arrows are homomorphisms, similarly to the category of relational $\sigma$-structures associated with $\CSP^\sigma$ problems.
\end{remark}

\begin{example}\label{ex:one}
A graph is called \emph{split} if there exists a partition of its vertices into two classes such that one class induces an independent set and the other class induces a clique. The problem that decides whether a given input graph is split is a standard example of a Matrix Partition: it is split if and only if it admits an $\matrice{M}$-partition, where
\begin{equation*}
\matrice{M} = \begin{pmatrix}0 & \star\\ \star & 1\end{pmatrix}
\end{equation*}
This problem is solvable in $\polynomial$-time because it can be reduced to 2-SAT. Let $\structure{G}$ be an input graph with vertices $G=\{g_1,\ldots,g_n\}$. Let $X:=\{x_1,\ldots,x_n\}$ be a set of variables. Let $\Phi_\structure{G}$ be the 2-SAT formula with variables from $X$ such that, for every two distinct $g_i,g_j\in G$, if $g_ig_j$ is an edge of $\structure{G}$, then $\Phi_\structure{G}$ contains a clause $(x_i\vee x_j)$; otherwise, $\Phi_\structure{G}$ contains a clause $(\neg x_i\vee \neg x_j)$. It can easily be checked that $\structure{G}$ is a split graph if and only if $\Phi_\structure{G}$ is satisfiable.
\begin{figure}
\centering
\includegraphics[scale=1]{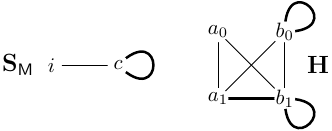}
\caption{Symmetric $\star$-graphs from Examples~\ref{ex:one} and~\ref{ex:two}. 1-edges are thick, $\star$-edges are thin.}
\label{fig:example}
\end{figure}
The corresponding $\star$-graph $\structure{S}_\matrice{M}$ is given on Figure~\ref{fig:example}. Feder and Hell's Matrix Partitions are always harder than the corresponding $\MP$ problems. Particularly, $\MP(\structure{S}_\matrice{M})$ reduces to $\matrice{M}$-partition as follows. For a given input 01-graph $\structure{G}$, replace every 0-loop vertex $x$ with two non-adjacent vertices $x_1,x_2$ that are \emph{twins}, i.e., for every $y\in G$, $\rel{E}^\structure{G}(x_1,y)=\rel{E}^\structure{G}(x_2,y)$ and $\rel{E}^\structure{G}(y,x_1)=\rel{E}^\structure{G}(y,x_2)$. 1-loop vertices are replaced by two adjacent twins. Denote the resulting graph by $\structure{G}'$. If $\structure{G}\to\structure{S}_\matrice{M}$, then the vertices of $\structure{G}'$ obtained from 0-loops induce an independent set, and those that are obtained from 1-loops induce a clique, so $\structure{G}'$ is a split graph. If $\structure{G}'$ is a split graph, then, for each pair of non-adjacent twins, at least one of them belongs to the independent set, and, as they are twins, we can put the other twin to the independent set as well. A similar argument holds for adjacent twins. Then, there is a homomorphism $\structure{G}\to\structure{S}_\matrice{M}$.
\end{example}
\begin{example}\label{ex:two}
Consider the  $\star$-graph $\structure{H}$ given on Figure~\ref{fig:example}.
We reduce the problem $\MP(\structure{H})$ to 2-SAT as well. Let $\structure{G}$ be an input 01-graph. It is useful to partition its elements into two parts $A\sqcup B$ depending on the loop type, i.e., for each $x\in G$, if $\rel{E}^\structure{G}(x,x)=0$, then $x\in A$, otherwise, $x\in B$.
It can be easily checked that if $h\colon \structure{G}\to\structure{H}$ is a homomorphism, then, for every $x\in G$, we have that $x\in A$ if and only if $h(x)\in\{a_0,a_1\}$.
If, for some $x,y\in A$, $\rel{E}^\structure{G}(x,y)=1$, then $h(x)\not= h(y)$, and similarly, for $x',y'\in B$, if $\rel{E}^\structure{G}(x',y')=0$, then $h(x')\not=h(y')$.
Now, we construct the formula $\Phi_\structure{G}$ on the variables $A\sqcup B$. The clauses of $\Phi_\structure{G}$ are the following.
\begin{itemize}
\item For $x,y\in A$, if $\rel{E}^\structure{G}(x,y)=1$, then we add to $\Phi_\structure{G}$ two clauses $(x\vee y)\wedge(\neg x\vee \neg y)$, which means ``$x$ is different from $y$''.
\item Similarly, for $x,y\in B$, if $\rel{E}^\structure{G}(x,y)=0$, then we also add to $\Phi_\structure{G}$ two clauses $(x\vee y)\wedge(\neg x\vee \neg y)$.
\item For $x\in A$ and $y\in B$, if $\rel{E}^\structure{G}(x,y) = 0$, then we cannot map $x$ and $y$ to $a_1$ and $b_1$. Therefore, we add $(\neg x\vee \neg y)$.
\item For $x\in A$ and $y\in B$, if $\rel{E}^\structure{G}(x,y) = 1$, then we cannot map $x$ and $y$ to $a_0$ and $b_0$. Therefore, we add $(x\vee y)$.
\end{itemize}
It can be easily checked that there is a homomorphism $h\colon\structure{G}\to\structure{H}$ if and only if $\Phi_\structure{G}$ is satisfiable.
\end{example}

Let us end these preliminaries with the notion of $\polynomial$-time equivalence between two families of problems, which allows to transfer dichotomy
results. Two decision problems $P_1$ and $P_2$ are \emph{$\polynomial$-time equivalent} if there is a $\polynomial$-time reduction from $P_1$ to $P_2$, and a $\polynomial$-time reduction from $P_2$ to $P_1$.

For two families $\family{C}$ and $\family{C}'$ of decision problems, we say that they are \emph{$\polynomial$-time equivalent} if, for every problem
$M\in \family{C}$, one can find in $\polynomial$-time $M'\in \family{C}'$ and both are $\polynomial$-time equivalent, and similarly, for every  $M'\in
\family{C}'$, one can find in $\polynomial$-time $M \in \family{C}$ and both are $\polynomial$-time equivalent. 

\begin{remark} All along the paper, whenever we consider a problem $\MP_*^\sigma(\structure{H})$, for $*\in
  \{01,\star,\varnothing\}$, we assume that there is no  $x\in H$  such that for all $\rel{R}\in \sigma$,
  $\rel{R}(x,\ldots,x)= \star$. Otherwise, the problem is trivial as then $\MP_*^\sigma(\structure{H})$ equals  $\Cat_*^\sigma$. 
\end{remark}

\section{$\MP_\varnothing^\sigma$ is contained in  $\CSP^{\sigma_\CSP}$}
\label{sec:fromemptysettocsp}

Let $\sigma=\{\rel{R}_1,\ldots,\rel{R}_n\}$ be a signature, the arity of each $\rel{R}_i$ is denoted by $k_i$. We prove in this section that there is a signature $\sigma_\CSP$ such that
any problem in $\MP_\varnothing^\sigma$ is $\polynomial$-time equivalent to a problem in $\CSP^{\sigma_\CSP}$ which implies that $\MP_\varnothing^\sigma$ has a dichotomy. It will follow from a more general result that states that every decision problem that checks the existence of a homomorphism to a fixed $(*,\sigma)$-structure is $\polynomial$-time equivalent to some finite $\CSP$ under the assumption that the poset $(P_*,\preceq_*)$ is a lattice.

Let $J$ be the set of join-irreducible elements of $(P_*,\preceq_*)$. Then, the signature $\sigma_\CSP$ is defined as follows:
\begin{equation*}
    \sigma_\CSP := \{\rel{R}_j\mid  \rel{R}\in\sigma, j\in J\}\text{, where $\rel{R}_j$ has the same arity as $\rel{R}$.}
\end{equation*}

\paragraph*{Construction}
Let $\structure{A}$ be a $(*,\sigma)$-structure. Denote by $\CSP_*^\sigma(\structure{A})$ the problem that decides if there is a homomorphism from an input $(*,\sigma)$-structure to $\structure{A}$. The corresponding relational $\sigma_\CSP$-structure $\structure{A}_\CSP$ has the same domain $A$. For every $k$-ary relation $\rel{R}\in\sigma$, every join-irreducible $j\in J$, and every tuple $\tuple{a}\in A^k$, we define $\rel{R}_{j}^{\structure{A}_\CSP}$ as follows:
\begin{equation}\label{eq:emptysetcsp_equation}
\rel{R}_{j}^{\structure{A}_\CSP}(\tuple{a})=1 \Leftrightarrow j \preceq_* \rel{R}^{\structure{A}}(\tuple{a})
\end{equation}
\begin{observation}
    The correspondence between $(*,\sigma)$-structures and relational $\sigma_\CSP$-structures is one-to-one if and only if $(P_*,\preceq_*)$ is a Boolean lattice.
\end{observation}
\begin{theorem}\label{thm:lattice_csp}
If $(P_*,\preceq_*)$ is a lattice, then, for every $(*,\sigma)$-structure $\structure{A}$, the problems $\CSP_*^\sigma(\structure{A})$ and $\CSP^{\sigma_\CSP}(\structure{A}_\CSP)$ are $\polynomial$-time equivalent.
\end{theorem}
\begin{proof}
    Let $\structure{B}$ be an input of the problem $\CSP_*^\sigma(\structure{A})$ and let $\structure{B}_\CSP$ be the corresponding $\CSP$ structure constructed in a similar way as $\structure{A}_\CSP$.
    Let $h\colon B\to A$ be a mapping.
    Then, $h$ is a homomorphism from $\structure{B}$ to $\structure{A}$ if and only if it is a homomorphism from $\structure{B}_\CSP$ to $\structure{A}_\CSP$.
    Indeed, suppose that $h\colon \structure{B}\to\structure{A}$ is a homomorphism; then, for every $k$-ary relation $\rel{R}_j\in\sigma_\CSP$ and every $\tuple{b}\in B^k$, by \cref{eq:emptysetcsp_equation}, we have the following:
    \begin{equation*}
    \rel{R}_j^{\structure{B}_\CSP}(\tuple{b})=1\Leftrightarrow j \preceq_* \rel{R}^{\structure{B}}(\tuple{b}) \Rightarrow j\preceq_* \rel{R}^{\structure{A}}(h(\tuple{b}))\Leftrightarrow \rel{R}_j^{\structure{A}_\CSP}(h(\tuple{b}))=1
    \end{equation*}
    Similarly, one can prove the other direction. So, $\CSP_*^\sigma(\structure{A})$ reduces to $\CSP^{\sigma_\CSP}(\structure{A}_\CSP)$.

    Let $\structure{G}$ be a relational $\sigma_\CSP$-structure which is an input of $\CSP^{\sigma_\CSP}(\structure{A}_\CSP)$. Consider some $k$-ary $\rel{R}\in\sigma$ and some $\tuple{g}\in G^k$. Let $J_\tuple{g}:=\{j\in J\mid  \rel{R}_j^\structure{G}(\tuple{g}) = 1\}$. Let $p_\tuple{g}:=\bigvee_{j\in J_\tuple{g}}j$ if $J_\tuple{g}$ is not empty; otherwise, let $p_\tuple{g}$ be the minimal element of $(P_*,\preceq_*)$.

    Now we will construct a $(*,\sigma)$-structure $\structure{G}_*$ such that $\structure{G}\to\structure{A}_\CSP$ if and only if $\structure{G}_*\to\structure{A}$.
    It has the same domain $G$. For every $k$-ary $\rel{R}\in\sigma$ and every $\tuple{g}\in G^k$, we put $\rel{R}^{\structure{G}_*}(\tuple{g}):=p_\tuple{g}$.
    
    If there is a homomorphism $h\colon \structure{G}\to\structure{A}_\CSP$, then, for all $k$-ary $\rel{R}\in\sigma$, $\tuple{g}\in G^k, j\in J_\tuple{g}$, we have that  $\rel{R}_j^{\structure{A}_\CSP}(h(\tuple{g}))=1$.
    So, by \cref{eq:emptysetcsp_equation}, for all $j\in J_\tuple{g}$, $j\preceq_* \rel{R}^\structure{A}(h(\tuple{g}))$. Therefore, $p_\tuple{g}\preceq_*\rel{R}^\structure{A}(h(\tuple{g}))$. As $\rel{R}^{\structure{G}_*}(\tuple{g})=p_\tuple{g}$, we conclude that $h$ is a homomorphism from $\structure{G}_*$ to $\structure{A}$.
    
    If there is a homomorphism $h\colon\structure{G}_*\to\structure{A}$, then, for every $k$-ary $\rel{R}\in\sigma, j\in J, \tuple{g}\in G^k$,
    \begin{equation*}
        \rel{R}_j^\structure{G}(\tuple{g}) = 1\Rightarrow j\preceq_* p_\tuple{g}\Leftrightarrow j\preceq_*\rel{R}^{\structure{G}_*}(\tuple{g})\Rightarrow j \preceq_* \rel{R}^\structure{A}(h(\tuple{g}))\Leftrightarrow \rel{R}_j^{\structure{A}_\CSP}(h(\tuple{g}))=1
    \end{equation*}
\end{proof}

Theorem~\ref{thm:lattice_csp} implies, in particular, that $\MP_\varnothing^\sigma$ is equivalent to a fragment of $\CSP^{\sigma_\CSP}$. Therefore, $\MP_\varnothing^\sigma$ has a dichotomy.
\begin{corollary}\label{cor:mptocsp}
For every $(\star,\sigma)$-structure $\structure{H}$, there is a relational $\sigma_\CSP$-structure $\structure{H}_\CSP$ such that $\MP_\varnothing^\sigma(\structure{H})$ and $\CSP^{\sigma_\CSP}(\structure{H}_\CSP)$ are $\polynomial$-time equivalent.
\end{corollary}
\begin{proof}
First, observe that, by Proposition~\ref{prop:subposet}, $\structure{H}$ is also a $\varnothing$-structure. Then, as $(P_\varnothing,\preceq_\varnothing)$ is a lattice, the result follows from Theorem~\ref{thm:lattice_csp}.
\end{proof}

For $*\in\{01,\star,\varnothing\}$, the notion of homomorphism between  $(\symb{*},\sigma)$-structures admits a notion of the \emph{core}. It extends the definition given for trigraphs ($\star$-graphs) from~\cite{hellnesetriltrigraph2007}.
%
For $*\in\{01,\star,\varnothing\}$, a $(\symb{*},\sigma)$-structure $\structure{C}$ is called a \emph{core} if any homomorphism $h\colon \structure{C}\to \structure{C}$ is an isomorphism,
where isomorphism between $(\symb{*},\sigma)$-structures is a one-to-one mapping such that it is a homomorphism and its inverse is also a homomorphism. The proof of the following proposition relies on the well-known fact that every finite relational structure is homomorphically equivalent to one of its induced substructures that is a core and that is unique up to isomorphism.

\begin{proposition}
Let $*\in\{01,\star,\varnothing\}$ and $\sigma$ be a finite relational signature. For every $(\symb{*},\sigma)$-structure $\structure{A}_*$, there exists a $(\symb{*},\sigma)$-structure $\structure{C}_*$ that is a core and that is homomorphically equivalent to $\structure{A}_*$. Such a structure is unique up to isomorphism.
\end{proposition}

\begin{proof}
We know that $\structure{A}_*$ is also a $\varnothing$-structure by Proposition~\ref{prop:subposet}. Then, consider the relational $\sigma_\CSP$-structure $\structure{A}_\CSP$ provided by Corollary~\ref{cor:mptocsp}. It admits a core $\structure{C}_\CSP$ which is an induced substructure of $\structure{A}_\CSP$. Let $\structure{C}_*$ be the corresponding $\varnothing$-structure by Corollary~\ref{cor:mptocsp}, it must also be homomorphically equivalent to $\structure{A}_*$ and be an induced substructure of $\structure{A}_*$. As $\structure{C}_*$ is an induced substructure of $\structure{A}_*$, it is also a $*$-structure. Let $e:C_*\to C_*$ be a non-injective endomorphism. Then, the same map $e$ will be a non-injective endomorphism of the core $\structure{C}_\CSP$, which is impossible. So, $\structure{C}_*$ is a core. Let $\structure{C}_*'$ be another core of $\structure{A}_*$, that is not isomorphic to $\structure{C}_*$. But, then $\structure{C}_\CSP'$ must be the core of $\structure{A}_\CSP$ and $\structure{C}_\CSP\not\cong\structure{C}_\CSP'$, which is impossible as cores of relational structures are unique up to isomorphism.
\end{proof}

\section{Equivalence between $\MP^\sigma_\star$ and $\MP^\sigma$}\label{sec:mpstar=mp}

In this section, we will prove the following theorem.

\begin{theorem}\label{thm:mpstardichotomy}
For any finite signature $\sigma$, $\MP^\sigma$ and $\MP_\star^\sigma$ are $\polynomial$-time equivalent. 
\end{theorem}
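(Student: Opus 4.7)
The plan is to show, for each $\star$-structure $\structure{H}$, that $\MP^\sigma(\structure{H})$ and $\MP_\star^\sigma(\structure{H})$ are polynomial-time equivalent; this suffices since in both directions the pairing $\structure{H}\mapsto\structure{H}$ is trivially computable. The reduction $\MP^\sigma(\structure{H})\to\MP_\star^\sigma(\structure{H})$ is the identity: by \Cref{prop:subposet} every $01$-structure is also a $\star$-structure, and the $\star$-homomorphism condition on an input without $\star$-tuples is exactly the original homomorphism condition. The substantive direction is $\MP_\star^\sigma(\structure{H})\to\MP^\sigma(\structure{H})$: given a $\star$-structure $\structure{G}$, I must produce in polynomial time a $01$-structure $\structure{G}'$ such that $\structure{G}\to\structure{H}$ if and only if $\structure{G}'\to\structure{H}$.

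Following the intuition sketched in the introduction, I would use a Hadamard-matrix blow-up. Fix an $N\times N$ Hadamard matrix $\matrice{H}_N$ over $\{0,1\}$ with $N$ a power of two satisfying $N>|H|^2$; by Lindsey's lemma every submatrix of $\matrice{H}_N$ of side length at least $\lceil N/|H|\rceil$ contains both a $0$- and a $1$-entry. Construct $\structure{G}'$ by replacing each $v\in G$ with $N$ fresh copies $v^{(1)},\ldots,v^{(N)}$ and, for each $\relation{R}\in\sigma$ of arity $k\geq 2$ and each $\tuple{t}=(t_1,\ldots,t_k)\in G^k$, setting $\relation{R}^{\structure{G}'}(t_1^{(i_1)},\ldots,t_k^{(i_k)})$ to $\relation{R}^{\structure{G}}(\tuple{t})$ when the latter lies in $\{0,1\}$ and to $\matrice{H}_N[i_1,i_2]$ when $\relation{R}^{\structure{G}}(\tuple{t})=\star$; for unary relations, a balanced alternating $0/1$ colouring of the $N$ copies suffices. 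Since $N$ depends only on $|H|$, $\structure{G}'$ has size polynomial in $|\structure{G}|$.

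For correctness, any $h\colon\structure{G}\to\structure{H}$ lifts to $h'\colon\structure{G}'\to\structure{H}$ via $h'(v^{(i)})=h(v)$: on a $\star$-tuple $\tuple{t}$ the image $\relation{R}^{\structure{H}}(h(\tuple{t}))$ is forced to equal $\star$, which absorbs any $0/1$ label introduced by $\matrice{H}_N$ because $0,1\preceq_\star\star$. Conversely, given $h'\colon\structure{G}'\to\structure{H}$, the pigeonhole principle yields, for each $v\in G$, some $a_v\in H$ that is the image of at least $\lceil N/|H|\rceil$ copies of $v$; set $h(v)=a_v$. For a $01$-tuple any representative blow-up tuple transfers the homomorphism condition to $h$. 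For a $\star$-tuple $\tuple{t}$ the chosen copies of $t_1$ and $t_2$ index an $m\times m$ submatrix of $\matrice{H}_N$ with $m\geq\lceil N/|H|\rceil$; Lindsey's lemma provides within it both a $0$- and a $1$-entry, yielding blow-up tuples of $\structure{G}'$ labelled $0$ and $1$ respectively that both map to $(a_{t_1},\ldots,a_{t_k})$, so $\preceq_\star$ forces $\relation{R}^{\structure{H}}(a_{t_1},\ldots,a_{t_k})=\star$, as required.

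The main obstacle is the higher-arity case. Lindsey's lemma is inherently two-dimensional, so the construction above only exploits the first two coordinates of each $\star$-tuple, with the remaining coordinates filled in by pigeonhole on each component separately; verifying that this projection argument works uniformly for all arities, including tuples with repeated entries and the unary case (where the Hadamard pattern is replaced by an alternating colouring), is where most of the careful bookkeeping will go.
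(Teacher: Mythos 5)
Your construction is essentially the same as the paper's proof of \Cref{thm:mpstardichotomy}: blow each element $g$ up into a block of Hadamard-indexed copies, route each blown-up $\star$-tuple's label through the Hadamard entry at the first two coordinates, and close the argument by pigeonhole plus the non-monochromaticity of large Hadamard submatrices. A small bookkeeping difference: you invoke Lindsey's lemma in its general form, which applies to arbitrary (possibly overlapping) row and column index sets, so a blow-up with $N > |H|^2$ already works; the paper instead uses \Cref{lem:monochrome}, stated for two \emph{disjoint} equal-sized sets, and so must first carve disjoint halves $B_1,B_2$ out of the pigeonholed sets $A_{g_1},A_{g_2}$, which is why it takes $2^k > 4|H|^2+1$. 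Both routes are correct; yours gives a marginally cleaner bound and avoids the halving step.

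One genuine flaw: the alternating $0/1$ colouring you propose for unary $\star$-tuples does not work. The pigeonhole argument only gives you a set $A_g$ of size at least $N/|H|$ of copies of $g$ mapping to a single $a_g\in H$, and nothing forces $A_g$ to contain copies of both colours as soon as $|H|\geq 2$; so you cannot deduce $\relation{R}^{\structure{H}}(a_g)=\star$. (The paper's proof silently assumes every $\relation{R}_i$ has arity at least $2$, since its formula indexes $\matrice{H}_{2^k}[i_1,i_2]$, so this is a shared omission, but your proposed fix as written is incorrect and would need a different gadget or a separate trivial-case analysis.) Your other worry, about tuples with repeated entries, is in fact unfounded: if $t_1=t_2$ then $A_{t_1}=A_{t_2}$, and Lindsey's lemma applies just as well to the non-disjoint submatrix $\matrice{H}_N[A_{t_1},A_{t_1}]$, so the argument goes through without extra work.
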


In order to prove the $\polynomial$-time equivalence, we will show that, for any $\star$-structure $\structure{H}$, the two corresponding problems
$\MP^\sigma(\structure{H})$ and $\MP_\star^\sigma(\structure{H})$ are $\polynomial$-time equivalent.  Hell and Ne{\v s}et{\v r}il proved,  using probabilistic arguments in~\cite{hellnesetriltrigraph2007} that, for any $\star$-graph $\structure{G}$, there is a $01$-graph $\structure{G}_{01}$ such that $\structure{G}\in\MP_\star(\structure{H})\Leftrightarrow \structure{G}_{01}\in\MP(\structure{H})$. We provide in this section deterministic  $\polynomial$-time reductions based on \emph{Hadamard matrices}.

\begin{definition}[Hadamard Matrices]
An $n\times n$-matrix $\matrice{H}_n$, which entries are from $\{1,-1\}$, is called a \emph{Hadamard matrix} if
\begin{displaymath}
    \matrice{H}_n\cdot \matrice{H}_n^T = n\cdot \matrice{I}_n,
\end{displaymath}
where $\matrice{I}_n$ is the identity matrix of size $n$, and $\matrice{H}^T$ is the transpose of $\matrice{H}$.

\end{definition}

Hadamard matrices exist for any $n$ that is a power of $2$. 

\begin{lemma}[\cite{walsh1923}]\label{lem:hadamard} For every positive integer $n>1$, one can construct in time $2^{poly(n)}$ a $2^n\times 2^n$-Hadamard matrix.
\end{lemma}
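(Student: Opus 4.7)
The plan is to use Sylvester's classical recursive construction of Hadamard matrices, which is completely deterministic and elementary. I would begin by defining the base case $\matrice{H}_2 = \begin{pmatrix}1 & 1 \\ 1 & -1\end{pmatrix}$ and then the recursive construction
\[
\matrice{H}_{2^n} \;=\; \begin{pmatrix} \matrice{H}_{2^{n-1}} & \matrice{H}_{2^{n-1}} \\ \matrice{H}_{2^{n-1}} & -\matrice{H}_{2^{n-1}} \end{pmatrix}
\]
for $n\geq 2$. This is sometimes written as the Kronecker product $\matrice{H}_{2^n} = \matrice{H}_2 \otimes \matrice{H}_{2^{n-1}}$, but the block form is the most convenient one for a direct inductive verification.

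The key step is to show by induction on $n$ that $\matrice{H}_{2^n}\cdot\matrice{H}_{2^n}^T = 2^n\cdot \matrice{I}_{2^n}$. The base case is a direct computation. For the induction step, I would compute the four blocks of the product $\matrice{H}_{2^n}\cdot\matrice{H}_{2^n}^T$ directly from the block decomposition, using that $\matrice{H}_{2^{n-1}}\cdot\matrice{H}_{2^{n-1}}^T = 2^{n-1}\cdot \matrice{I}_{2^{n-1}}$: the two diagonal blocks become $2\cdot 2^{n-1}\cdot \matrice{I}_{2^{n-1}} = 2^n\cdot \matrice{I}_{2^{n-1}}$, and the two off-diagonal blocks become the zero matrix because of the sign cancellation between $\matrice{H}_{2^{n-1}}\matrice{H}_{2^{n-1}}^T$ and $-\matrice{H}_{2^{n-1}}\matrice{H}_{2^{n-1}}^T$.

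For the complexity bound, the construction requires producing a matrix of size $2^n\times 2^n$, i.e., $4^n$ entries, each taking constant time to write once the recursive blocks are already built; hence the total running time is bounded by $O(4^n)$, which is $2^{O(n)}$ and therefore within the claimed bound $2^{\operatorname{poly}(n)}$.

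I do not anticipate any real obstacle here: the construction is classical and both the Hadamard identity and the time bound follow from short inductive arguments. The only thing to be careful about is that the statement asks for a $2^n\times 2^n$ matrix for every $n>1$, so one must make sure the recursion bottoms out at $\matrice{H}_2$ (or at $\matrice{H}_1 = (1)$ if one prefers to start from $n=0$); neither choice affects the argument.
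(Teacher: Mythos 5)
Your proof is correct and follows the classical Sylvester recursive (Kronecker) construction, which is essentially the same content as the reference \cite{walsh1923} that the paper cites without further proof; the inductive verification of $\matrice{H}_{2^n}\matrice{H}_{2^n}^T = 2^n\matrice{I}_{2^n}$ and the $O(4^n)=2^{O(n)}$ time bound are both accurate.
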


Suppose that $\matrice{H}_n$ is an $n \times n$ Hadamard matrix and that its rows and columns are indexed by $[n]$. For two subsets $A,B$ of $[n]$, denote by $\matrice{H}_n[A,B]$ the submatrix of $\matrice{H}_n$ with rows indexed by $A$ and columns indexed by $B$. If all the entries of
$\matrice{H}_n[A,B]$ are equal, then $\matrice{H}_n[A,B]$ is called \emph{monochromatic}~\cite{alon1986,pudlak1988}. We will need the following to prove that if
$\structure{G}_{01}\in \MP(\structure{H})$, then $\structure{G}\in\MP_\star(\structure{H})$.

\begin{lemma}[\cite{alon1986,pudlak1988}]\label{lem:monochrome}
  Let $\matrice{H}_n$ be an $n\times n$-Hadamard matrix, whose rows and columns are indexed by $[n]$. Then, for any two disjoint sets
  $A,B\subseteq [n]$ such that $|A|=|B| > \sqrt{n}$, the submatrix $\matrice{H}_n[A,B]$ of $\matrice{H}_n$ is not monochromatic.
  %
\end{lemma}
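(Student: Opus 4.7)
The plan is to argue by contradiction using a double-counting identity coming from orthogonality of the rows of $\matrice{H}_n$. Assume $\matrice{H}_n[A,B]$ is monochromatic, say all its entries equal some $c\in\{-1,+1\}$, and write $k=|A|=|B|$ with $k>\sqrt{n}$. For each column index $\ell\in[n]$, define the column sum restricted to $A$:
\begin{equation*}
S_\ell \;=\; \sum_{i\in A} \matrice{H}_n[i,\ell].
\end{equation*}
I would first compute $\sum_{\ell=1}^n S_\ell^2$ globally and then bound it from below by the contribution of the columns in $B$.

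For the global computation, expanding the square and swapping the order of summation gives
\begin{equation*}
\sum_{\ell=1}^n S_\ell^2 \;=\; \sum_{i,j\in A}\sum_{\ell=1}^n \matrice{H}_n[i,\ell]\matrice{H}_n[j,\ell].
\end{equation*}
The inner sum is the $(i,j)$-entry of $\matrice{H}_n\matrice{H}_n^T$, which by the Hadamard identity equals $n$ if $i=j$ and $0$ otherwise. Hence the whole sum collapses to $\sum_{i\in A} n = k\,n$.

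For the lower bound, the monochromaticity assumption means that for every $\ell\in B$ we have $S_\ell=kc$, and therefore $S_\ell^2=k^2$. Since every $S_\ell^2$ is nonnegative,
\begin{equation*}
k\,n \;=\; \sum_{\ell=1}^n S_\ell^2 \;\geq\; \sum_{\ell\in B} S_\ell^2 \;=\; |B|\cdot k^2 \;=\; k^3.
\end{equation*}
Dividing by $k>0$ yields $n\geq k^2$, contradicting the hypothesis $k>\sqrt{n}$. This closes the argument.

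There is no serious obstacle: the only step that requires care is the collapse of the double sum, where one must invoke the Hadamard identity $\matrice{H}_n\matrice{H}_n^T=n\matrice{I}_n$ correctly (note that this is why the proof works even without the disjointness of $A$ and $B$; disjointness is stated in the lemma for its intended combinatorial use but is not logically needed here). The rest is bookkeeping with $\pm 1$ entries.
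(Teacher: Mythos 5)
Your proof is correct; it is the standard second-moment argument, i.e. the proof of Lindsey's Lemma specialised to monochromatic submatrices. The paper states \Cref{lem:monochrome} as an imported result from the cited references without including a proof, so there is no in-paper argument to compare against, and your version supplies the omitted details faithfully: the collapse $\sum_{\ell}S_\ell^2 = kn$ via $\matrice{H}_n\matrice{H}_n^T=n\matrice{I}_n$, the lower bound $\sum_{\ell\in B}S_\ell^2 = k^3$ from monochromaticity, and the resulting $n\geq k^2$, which contradicts $k>\sqrt{n}$. Your side remark that disjointness of $A$ and $B$ plays no role here is also accurate — only $B\subseteq[n]$ and orthogonality of the rows indexed by $A$ are used — so the same computation in fact forbids any monochromatic $k\times k$ submatrix with $k>\sqrt{n}$; the disjointness hypothesis in the lemma statement is simply what is needed downstream in the proof of \Cref{thm:mpstardichotomy}.
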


We will prove a more general result that yields Theorem~\ref{thm:mpstardichotomy}.

\begin{lemma}\label{lem:mpstardichotomy}
Let $\sigma$ be a finite relational signature and let $\structure{H}$ be a fixed $(\star,\sigma)$-structure. Then, for any $(\star,\sigma)$-structure $\structure{G}$, one can construct in time polynomial in $|G|$ a $(01,\sigma)$-structure $\structure{G}_{01}$ such that
\begin{itemize}
\item there is a surjective homomorphism $\structure{G}_{01}\to\structure{G}$, and
\item if there is a homomorphism $\structure{G}_{01}\to\structure{H}$, then there is a homomorphism $\structure{G}\to\structure{H}$.
\end{itemize}
\end{lemma}

\begin{proof}

Denote by $m$ the size $|H|$ of $\structure{H}$. Let $n$ be the smallest positive integer such that $2^n > 4m^2+1$, and let $\matrice{H}_{2^n}$ be the Hadamard
matrix provided by Lemma~\ref{lem:hadamard}. Let  the domain of $\structure{G}_{01}$ be the disjoint union $\bigsqcup_{g\in G} V_g$, where for all $g\in G$,
$|V_g| = 2^n$. Let us enumerate the set $V_g$ as $\{v_{g,1},\ldots, v_{g,2^n}\}$ for each $g\in G$. For each $k$-ary $\rel{R}\in\sigma$ and for each tuple
$(v_{g_1,j_1},v_{g_2,j_2}, \ldots,v_{g_{k},j_{k}})\in (G_{01})^k$,
\begin{align*}
  \rel{R}^{\structure{G}_{01}}(v_{g_1,j_1}, \ldots,v_{g_{k},j_{k}}) &= \begin{cases} \rel{R}^{\structure{G}}(g_1,\ldots,g_{k}) & \textrm{if $\rel{R}^{\structure{G}}(g_1,\ldots,g_{k})\ne \star$},\\
    (\matrice{H}_{2^n}[j_1,j_2] +1)/2 & \textrm{otherwise}. \end{cases}
  \end{align*}
  Notice that, in the case ``$\rel{R}^\structure{G}(g_1,\ldots,g_{k}) = \star$'', if $\matrice{H}_{2^n}[j_1,j_2]=1$, then $\rel{R}^{\structure{G}_{01}}(v_{g_1,j_1}, \ldots,v_{g_{k},j_{k}}) = 1$, and if $\matrice{H}_{2^n}[j_1,j_2]=-1$, then $\rel{R}^{\structure{G}_{01}}(v_{g_1,j_1}, \ldots,v_{g_{k},j_{k}}) = 0$.

By construction, there exists a surjective homomorphism $\pi\colon \structure{G}_{01}\to \structure{G}$ such that, for all $g\in G$ and all $v\in V_g$, $\pi(v) = g$.

Suppose that there exists a homomorphism $h_{01}\colon \structure{G}_{01}\to\structure{H}$. By pigeonhole principle, every $V_g$ has a subset $A_g$ such that $|A_g|\geq\frac{|V_g|}{m}$ and all elements of $A_g$ are mapped to the same element $x_g$ of $\structure{H}$. Let $h\colon G\to H$ be defined as follows: for every $g\in G$, put  $h(g) := x_g$. 

Let $\rel{R}$ be some $k$-ary relation of $\sigma$, and let $\tuple{g}:=(g_1,\ldots,g_k)\in G^k$. If $\rel{R}^\structure{G}(\tuple{g})\in\{0,1\}$, then, for all $\tuple{g}_{01}\in A_{g_1}\times\dots\times A_{g_{k}}$, we have $\rel{R}^{\structure{G}_{01}}(\tuple{g}_{01})=\rel{R}^\structure{G}(\tuple{g})$. This implies that $\rel{R}^\structure{G}(\tuple{g})\preceq_\star\rel{R}^\structure{H}(h(\tuple{g}))$. If $\rel{R}^\structure{G}(\tuple{g})=\star$, then the matrix $\structure{H}_{2^n}[A_{g_1},A_{g_2}]$ has size at least $\frac{|V_g|}{m}\times \frac{|V_g|}{m}$, where $A_g$ is identified with the set $\{i\in [2^n]\mid v_{g,i}\in A_g\}$. One checks easily that there are subsets $B_1$ of $A_{g_1}$, and $B_2$ of $A_{g_2}$, that do not intersect and both are of size at least $\frac{|V_g|}{2m}$. Observe also that
\begin{displaymath}
    \frac{|V_g|}{2m} \geq \frac{2^n}{2m} \geq \frac{2^n}{\sqrt{2^n}} \geq \sqrt{2^n}
\end{displaymath}
because $2^n> 4m^2+1$, \emph{i.e.}, $\sqrt{2^n}>\sqrt{4m^2+1}> 2m$. Thus, by Lemma~\ref{lem:monochrome}, the submatrix $\matrice{H}_{2^n}[B_1,B_2]$ is not monochromatic. This means that $A_{g_1}\times\dots\times A_{g_{k}}$ contains two tuples $\tuple{g}_{01}$ and $\tuple{g}_{01}'$ such that $\rel{R}^{\structure{G}_{01}}(\tuple{g}_{01})=0$ and $\rel{R}^{\structure{G}_{01}}(\tuple{g}_{01}')=1$. Thus, $\rel{R}^\structure{H}(h(\tuple{g}))=\star$, and we are done.
\end{proof}

\begin{proof}[Proof of Theorem~\ref{thm:mpstardichotomy}]
By Proposition~\ref{prop:subposet}, every $01$-structure is also a $\star$-structure, therefore  $\MP^\sigma(\structure{H})$ trivially reduces to $\MP_\star^\sigma(\structure{H})$. For the opposite direction, by Lemma~\ref{lem:mpstardichotomy}, for every input $\star$-structure $\structure{G}$ of the problem $\MP_\star^\sigma(\structure{H})$, one can construct in time polynomial in $|G|$ a structure $\structure{G}_{01}$ such that $\structure{G}\to\structure{H}$ if and only if $\structure{G}_{01}\to\structure{H}$.
\end{proof}

\section{Arity Reduction}\label{sec:arity}

Recall that a  \emph{primitive-positive formula} $\varphi( x_1,\ldots,x_n)$  is a first-order formula ($\fo^\sigma$) of the form
\begin{displaymath}
\exists x_{n+1},\ldots,x_m\;(\psi_1\wedge\dots\wedge\psi_\ell)
\end{displaymath}
where each $\psi_i$ is either $x_s=x_j$, $\true$, or $\rel{R}(x_{i_1},\ldots,x_{i_k})=1$, with $\rel{R}$ a $k$-ary relation symbol in $\sigma$.

Let $\sigma = \{\rel{R}_1,\ldots,\rel{R}_n\},\sigma'=\{\rel{S}_1,\ldots,\rel{S}_m\}$ be two signatures, and
$\structure{A}, \structure{A}'$ be relational $\sigma$- and $\sigma'$-structures over the same domain $A$. We say that $\structure{A}$
\emph{pp-defines} $\structure{A}'$ if, for every $k$-ary relation $\rel{S}_j^{\structure{A'}}$ of $\structure{A}'$, there
exists a primitive-positive formula $\varphi_j\in \fo^\sigma$ with $k$ free variables such that, for all
$ (a_1,\ldots,a_k)\in A^{k}$, $\rel{S}_j^{\structure{A'}}(a_1,\ldots,a_k)=1 \Leftrightarrow \structure{A}'\models\varphi_j(
a_1/x_1,\ldots,a_k/x_k)$.

\begin{theorem}[\cite{jeavons1998}]\label{thm:pp-definition}
If a relational $\sigma$-structure $\structure{A}$ pp-defines a relational $\sigma'$-structure $\structure{A}'$, then the problem $\CSP^{\sigma'}(\structure{A}')$ reduces in $\polynomial$-time to $\CSP^\sigma(\structure{A})$.
\end{theorem}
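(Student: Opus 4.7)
The plan is a standard gadget construction: given an instance $\structure{B}'$ of $\CSP^{\sigma'}(\structure{A}')$, I will build in polynomial time an instance $\structure{B}$ of $\CSP^{\sigma}(\structure{A})$ such that $\structure{B}'\to\structure{A}'$ if and only if $\structure{B}\to\structure{A}$. The construction substitutes each $\sigma'$-tuple of $\structure{B}'$ by a fresh copy of the pp-gadget provided by the defining formulas.

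For each $\relation{S}_j\in\sigma'$ of arity $k_j$, fix the defining primitive-positive formula $\varphi_j(x_1,\ldots,x_{k_j})=\exists y_1\ldots\exists y_{l_j}(\psi_{j,1}\wedge\cdots\wedge\psi_{j,p_j})$ in $\fo^\sigma$. Start with the domain of $\structure{B}$ equal to $B'$ and all $\relation{R}_i^{\structure{B}}$ empty. Then, for every $\relation{S}_j\in\sigma'$ and every tuple $\tuple{t}=(b_1,\ldots,b_{k_j})\in \relation{S}_j^{\structure{B}'}$, introduce fresh elements $y_1^{\tuple{t}},\ldots,y_{l_j}^{\tuple{t}}$, and for each atom $\psi_{j,s}$ of $\varphi_j$ do the following: if $\psi_{j,s}$ is $\relation{R}_i(z_{i_1},\ldots,z_{i_k})$, add the tuple obtained by substituting $b_r$ for $x_r$ and $y_r^{\tuple{t}}$ for $y_r$ to $\relation{R}_i^{\structure{B}}$; if $\psi_{j,s}$ is an equality $z=z'$, identify the two substituted elements in $\structure{B}$ (this is well-defined because $\varphi_j$ is fixed and the identifications are made locally inside the gadget); and ignore $\true$. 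Since each $\varphi_j$ has constant size, the size of $\structure{B}$ is linear in the size of $\structure{B}'$, so the reduction is in $\polynomial$-time (indeed even in logspace, but only $\polynomial$-time is needed here).

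For correctness, suppose first that $h'\colon\structure{B}'\to\structure{A}'$ is a homomorphism. For every tuple $\tuple{t}=(b_1,\ldots,b_{k_j})\in\relation{S}_j^{\structure{B}'}$ we have $\relation{S}_j^{\structure{A}'}(h'(\tuple{t}))=1$, so $\structure{A}\models\varphi_j(h'(b_1)/x_1,\ldots,h'(b_{k_j})/x_{k_j})$, and hence there exist witnesses $a_1^{\tuple{t}},\ldots,a_{l_j}^{\tuple{t}}\in A$ satisfying all conjuncts of $\varphi_j$. Define $h\colon B\to A$ by $h(b)=h'(b)$ for $b\in B'$ and $h(y_r^{\tuple{t}})=a_r^{\tuple{t}}$; by construction of $\structure{B}$ and the choice of witnesses, $h$ preserves every tuple added by the gadget, so $h$ is a homomorphism from $\structure{B}$ to $\structure{A}$. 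Conversely, if $h\colon\structure{B}\to\structure{A}$ is a homomorphism, set $h'=h|_{B'}$. For each $\tuple{t}\in\relation{S}_j^{\structure{B}'}$ the elements $h(y_1^{\tuple{t}}),\ldots,h(y_{l_j}^{\tuple{t}})$ of $A$ satisfy every conjunct of $\varphi_j$ when the $x_r$ are sent to $h'(b_r)$, which precisely witnesses $\structure{A}\models\varphi_j(h'(\tuple{t}))$, i.e. $\relation{S}_j^{\structure{A}'}(h'(\tuple{t}))=1$. Hence $h'$ is a homomorphism $\structure{B}'\to\structure{A}'$.

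The only delicate points are the handling of the equality atoms and of distinct occurrences of the same variable in the same atom of $\varphi_j$: both are resolved by identifying elements inside each gadget before adding tuples, which is legal since the identifications depend only on $\varphi_j$ and on $\tuple{t}$. I do not expect a real obstacle here; the work is entirely bookkeeping, and the statement follows directly once the gadget is defined.
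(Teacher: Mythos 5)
The paper does not prove this theorem; it is stated with a citation to Barto, Krokhin and Willard's CSP survey and used as a black box. Your proof is the standard textbook argument for pp-interpretability reductions (the "gadget replacement" construction), and it is essentially correct.

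One point deserves a small correction in wording: you claim the identifications coming from equality atoms are "made locally inside the gadget." That is true only when an equality involves at least one existential (fresh) variable. If $\varphi_j$ contains an equality between two free variables, say $x_1 = x_2$, then for every $\tuple{t}\in \relation{S}_j^{\structure{B}'}$ you must identify the \emph{original} elements $b_1$ and $b_2$ of $B'$, and such identifications can chain across many gadgets and collapse a large part of $B'$. The construction remains correct — the quotient is exactly what is forced on any homomorphism to $\structure{A}'$, so both directions of the equivalence still hold once you phrase $h'$ as the composition of the quotient map with the restriction of $h$ — and it is still computable in polynomial time (it amounts to computing connected components of an equality graph). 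But describing it as local to a gadget is inaccurate, and it is also the step that makes the logspace claim nontrivial, which you rightly do not lean on. With that caveat, the proposal is a complete and correct proof of the cited theorem.
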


\subsection{From directed graphs to many relations}

Let $\sigma = \{\rel{R}_1,\ldots,\rel{R}_n\}$ be a finite signature with arities $k_1,\ldots,k_n$, and such that $k_1\geq 2$.  We show that the
existence of a dichotomy for the class of problems $\MP_\star^\sigma$ implies the existence of a dichotomy for the class of $\star$-graphs $\MP_\star$. Let $\gamma = \{E(\cdot,\cdot)\}$ be the directed graph signature and let $\gamma_\CSP=\{E_0(\cdot,\cdot),E_1(\cdot,\cdot)\}$ be obtained from $\gamma$ by the construction from \Cref{sec:fromemptysettocsp}.

\begin{theorem}\label{thm:binary-to-many} For every $\star$-graph $\structure{H}_\star$, there exists a $(\symb{\star},\sigma)$-structure $\structure{A}_\star$ such that the problems
  $\MP_\star(\structure{H}_\star)$ and $\MP_\star^\sigma(\structure{A}_\star)$ are $\polynomial$-time equivalent.
\end{theorem}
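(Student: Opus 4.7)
The plan is to exhibit a $(\symb{\star},\sigma)$-structure $\structure{A}_\star$ whose domain is $H$ and whose first relation symbol encodes $\relation{E}^{\structure{H}_\star}$ on its first two coordinates, while all other relation symbols are made trivial. Concretely, I would set $A := H$, define
\[
\relation{R}_1^{\structure{A}_\star}(x_1,x_2,\ldots,x_{k_1}) := \relation{E}^{\structure{H}_\star}(x_1,x_2)
\]
(this uses the assumption $k_1\ge 2$), and set every remaining relation symbol $\relation{R}_i$, for $i\ge 2$, to the constant value $\star$. The non-triviality of $\structure{A}_\star$ in the sense of \Cref{obs:trivial} then follows from that of $\structure{H}_\star$: on the diagonal, $\relation{R}_1^{\structure{A}_\star}(x,\ldots,x)=\relation{E}^{\structure{H}_\star}(x,x)\ne\star$ for every $x\in H$.

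For the $\polynomial$-time reduction from $\MP_\star(\structure{H}_\star)$ to $\MP_\star^\sigma(\structure{A}_\star)$, given a $\star$-graph $\structure{G}_\star$ I would build a $(\symb{\star},\sigma)$-structure $\structure{G}'_\star$ on the same domain by the same recipe: $\relation{R}_1^{\structure{G}'_\star}$ lifts $\relation{E}^{\structure{G}_\star}$ on its first two coordinates, and every remaining relation receives an arbitrary fixed value, say $0$. The constraint imposed by $\relation{R}_i$ for $i\ge 2$ is vacuous because the target value is $\star$, while the constraint imposed by $\relation{R}_1$ on a tuple $(x_1,x_2,\ldots,x_{k_1})$ is exactly the edge condition $\relation{E}^{\structure{G}_\star}(x_1,x_2)\preceq_\star\relation{E}^{\structure{H}_\star}(h(x_1),h(x_2))$; hence homomorphisms $\structure{G}_\star\to\structure{H}_\star$ and $\structure{G}'_\star\to\structure{A}_\star$ coincide as maps $G\to H$.

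The converse reduction is the more delicate step. Given a $(\symb{\star},\sigma)$-structure $\structure{B}_\star$, I would produce a $\star$-graph $\structure{B}'_\star$ on $B$ by
\[
\relation{E}^{\structure{B}'_\star}(x,y) := \sup\nolimits_{\preceq_\star}\bigl\{\relation{R}_1^{\structure{B}_\star}(x,y,z_3,\ldots,z_{k_1}) : z_3,\ldots,z_{k_1}\in B\bigr\},
\]
where the supremum is computed in $(P_\star,\preceq_\star)$; any non-empty subset of $P_\star$ admits a join, so this is well defined. Because in $\structure{A}_\star$ every $\relation{R}_i$ with $i\ge 2$ is identically $\star$, the only binding constraints come from $\relation{R}_1$, and for each pair $(x,y)\in B^2$ the family of inequalities $\relation{R}_1^{\structure{B}_\star}(x,y,z_3,\ldots,z_{k_1})\preceq_\star \relation{E}^{\structure{H}_\star}(h(x),h(y))$ ranging over $z_3,\ldots,z_{k_1}$ collapses to the single inequality $\relation{E}^{\structure{B}'_\star}(x,y)\preceq_\star \relation{E}^{\structure{H}_\star}(h(x),h(y))$. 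Consequently $h\colon\structure{B}_\star\to\structure{A}_\star$ if and only if the same map $h$ satisfies $h\colon\structure{B}'_\star\to\structure{H}_\star$, and since $\sigma$ is fixed the whole construction runs in $\polynomial$-time. I expect the main obstacle to be precisely this supremum trick: unlike \Cref{thm:pp-definition}, which would let one $\mathrm{pp}$-define a binary relation from a higher-arity one in the $\CSP$ setting, here one must genuinely exploit the lattice-like structure of $(P_\star,\preceq_\star)$ to compress all constraints tied to a fixed pair of coordinates into a single $\star$-edge label.
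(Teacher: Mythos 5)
Your construction of $\structure{A}_\star$ (domain $H$, $\relation{R}_1$ reading $\relation{E}^{\structure{H}_\star}$ off the first two coordinates, every other symbol constantly $\star$) is exactly the paper's structure: translating the paper's $\structure{A}_\CSP$ defined by \cref{eq:bi-many1,eq:bi-many2} back across the $\varnothing\leftrightarrow\CSP$ bijection of \Cref{sec:fromemptysettocsp} yields precisely this $\structure{A}_\star$. Both of your reductions are correct and, in fact, coincide with the paper's once one unfolds the translations. The difference is purely presentational: the paper passes to the $\sigma_\CSP$-encoding, invokes \Cref{thm:pp-definition} for the pp-definitions of \cref{eq:bi-many1,eq:bi-many2,eq:many-bi}, and then checks at the end that the resulting $\varnothing$-structures happen to be $\star$-structures; you carry out the same construction directly on the $\star$-side. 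For the converse reduction, your join $\relation{E}^{\structure{B}'_\star}(x,y)=\bigvee_{z_3,\ldots,z_{k_1}}\relation{R}_1^{\structure{B}_\star}(x,y,z_3,\ldots,z_{k_1})$ is precisely what the pp-reduction for \cref{eq:many-bi} produces after translating back from $\sigma_\CSP$ to $\star$: the existential quantifier over $z_3,\ldots,z_{k_1}$ on each of the two relations $\relation{R}_{1,0},\relation{R}_{1,1}$ becomes the join in $(P_\star,\preceq_\star)$. So your closing remark that the join ``genuinely'' requires something beyond pp-definability is a slight mischaracterization: the paper does route this step through pp-definability, and the join is just that reduction viewed in the $\star$-language. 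If anything your presentation is a bit cleaner because it never leaves $\Cat_\star$ and therefore never needs the paper's final paragraph verifying that the intermediate $\varnothing$-structures are genuinely $\star$-structures; the price is that you re-derive by hand what \Cref{thm:pp-definition} gives for free in the encoded setting.

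Two small points worth tightening if you were to write this up: (i) in the first reduction you set $\relation{R}_i^{\structure{G}'_\star}\equiv 0$ for $i\ge 2$; any fixed value works since the target is $\star$, but one should note that these values impose no constraint; (ii) it is worth saying explicitly that the join is computed in $|B|^{k_1}$ steps, which is polynomial because $\sigma$ (and hence $k_1$) is fixed.
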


\begin{proof}
Let us recall from Section~\ref{sec:fromemptysettocsp} that there is a one-to-one correspondence between  $(\symb{\varnothing},\sigma)$-structures and relational $\sigma_\CSP$-structures such that for every two $(\symb{\varnothing},\sigma)$-structures $\structure{A}_\varnothing,\structure{B}_\varnothing$ and the corresponding $\sigma_\CSP$-structures $\structure{A}_\CSP,\structure{B}_\CSP$:
\begin{displaymath}
\structure{B}_\varnothing\to\structure{A}_\varnothing\Leftrightarrow\structure{B}_\CSP\to\structure{A}_\CSP.
\end{displaymath}
Now, let us consider a $\star$-graph $\structure{H}_\star$ with its corresponding relational $\gamma_\CSP$-structure $\structure{H}_\CSP$. The signature $\sigma_\CSP$ consists of relation symbols $\rel{R}_{i,j}$, for $i\in[n]$ and $j\in\{0,1\}$. Every relation $\rel{R}_{i,j}^{\structure{A}_\CSP}$ is primitively positively defined with a corresponding formula $\varphi_{i,j}$, where
\begin{equation}\label{eq:bi-many1}
\text{for each } j\in\{0,1\}\colon \varphi_{1,j}(x_1,\ldots,x_{k_1}) := (E_j(x_1,x_2)=1)\wedge(x_3=x_3)\wedge\dots\wedge(x_{k_1}=x_{k_1}),
\end{equation}
\begin{equation}\label{eq:bi-many2}
\text{for each } i>1\text{ and }j\in\{0,1\}\colon \varphi_{i,j}(x_1,\ldots,x_{k_i}) := (x_1=x_1)\wedge\dots\wedge(x_{k_i}=x_{k_i}). 
\end{equation}
Observe that the relational $\gamma_\CSP$-structure $\structure{H}_\CSP$ is also pp-definable from the relational $\sigma_\CSP$-structure $\structure{A}_\CSP$:
\begin{equation}\label{eq:many-bi}
\text{for each }j\in\{0,1\},\; \rel{E}_j(x_1,x_2) = 1 \Leftrightarrow \exists x_3,\ldots,x_{k_1}\; \rel{R}_{1,j}(x_1,x_2,\ldots,x_{k_1}).
\end{equation}
Now, consider a $\star$-graph $\structure{G}_\star$. Since every $\star$-graph is also a $\varnothing$-graph, there is a relational $\gamma_\CSP$-structure
$\structure{G}_\CSP$ such that $\structure{G}_\star\to\structure{H}_\star$ if and only if $\structure{G}_\CSP\to\structure{H}_\CSP$. By the pp-definability in \cref{eq:many-bi} and by
Theorem~\ref{thm:pp-definition}, we can construct a relational $\sigma_\CSP$-structure $\structure{B}_\CSP$ such that $\structure{G}_\CSP\to\structure{H}_\CSP$ if
and only if $\structure{B}_\CSP\to\structure{A}_\CSP$. From $\structure{B}_\CSP$, we obtain a $(\symb{\star},\sigma)$-structure
$\structure{B}_\star$ such that $\structure{B}_\CSP\to\structure{A}_\CSP$ if and only if $\structure{B}_\star\to\structure{A}_\star$, by Corollary~\ref{cor:mptocsp}. Notice
that, because $\structure{G}_\star$ is a $\star$-graph, for every $(x,y)\in G^2$, we have either $E_0^{\structure{G}_\CSP}(x,y)=1$ or $E_1^{\structure{G}_\CSP}(x,y)=1$. Thus, in $\structure{B}_\CSP$, every relation other than $\rel{R}_1$ is interpreted trivially, and, for each tuple $\tuple{x}\in B^{k_1}$, either $\rel{R}_{1,0}^{\structure{B}_\CSP}(\tuple{x})=1$ or $\rel{R}_{1,1}^{\structure{B}_\CSP}(\tuple{x})=1$. So, $\structure{B}_\star$ is indeed a $(\symb{\star},\sigma)$-structure, that finishes the reduction from $\MP_\star(\structure{H}_\star)$ to $\MP_\star^\sigma(\structure{A}_\star)$.

For the other direction, consider any $(\symb{\star},\sigma)$-structure $\structure{B}_\star$. Similarly, we construct a relational $\sigma_\CSP$-structure $\structure{B}_\CSP$, and by the pp-definition in
\cref{eq:bi-many1,eq:bi-many2}, we can compute a relational $\gamma_\CSP$-structure $\structure{G}_\CSP$ such that
$\structure{G}_\CSP\to \structure{H}_\CSP$ if and only if $\structure{B}_\star\to \structure{A}_\star$, and then a $\star$-graph $\structure{G}_\star$ such that
$\structure{B}_\star\to\structure{A}_\star$ if and only if $\structure{G}_\star\to\structure{H}_\star$. With similar arguments as in the other direction, we can prove
that $\structure{G}_\star$ is indeed a $\star$-graph. We have thus shown that $\MP_\star(\structure{G}_\star)$ and $\MP_\star^\sigma(\structure{A}_\star)$ are $\polynomial$-time
equivalent.
\end{proof}

\begin{remark}
One notices that the proof of Theorem~\ref{thm:binary-to-many} is still correct if we replace $\gamma$ by any relation $\rel{R}$ of arity $\ell \geq 2$, we require
in this case that $\rel{R_1}$ has arity at least $\ell$.  
\end{remark}

\subsection{From many relations to one}\label{subsec:many_to_one}

Let $\sigma = \{\rel{R}_1,\ldots,\rel{R}_p\}$ be a finite relational signature with arities $k_1,\ldots,k_p$, and let $k := \max_{1\leq i\leq p} k_i$. In this subsection, we show that there exists $\tilde{\sigma} = \{\rel{R}\}$ with $\rel{R}$ of arity $k+p-1$, such that, for every $(\symb{\star},\sigma)$-structure $\structure{A}$, there exists a $(\symb{\star},\tilde{\sigma})$-structure $\tilde{\structure{A}}$ such that $\MP_\star^\sigma(\structure{A})$ and $\MP_\star^{\tilde{\sigma}}(\tilde{\structure{A}})$ are $\polynomial$-time equivalent.

Let us first describe how the $\tilde{\sigma}$-structure $\tilde{\structure{A}}$ is constructed from a $\sigma$-structure $\structure{A}$. If $A$ is the domain of $\structure{A}$, then the domain  of $\tilde{\structure{A}}$ is $\tilde{A} := A \sqcup \{c_A\}$, with  a new element $c_A$. First, define two important types of tuples $\family{A}_1$ and $\family{A}_2$. The first one contains the information about the tuples of $\structure{A}$, and the second one consists of one special tuple.

\begin{equation*}
\family{A}_1 := \{\tilde{\tuple{t}}=(\underbrace{c_A,\ldots,c_A}_{i-1},\tuple{t},\underbrace{c_A,\ldots,c_A}_{k+p-k_i-i})\mid\rel{R}_i\in\sigma,\tuple{t}\in A^{k_i}\}, \; \family{A}_2 := \{(\underbrace{c_A,\ldots,c_A}_{k+p-1})\}
\end{equation*}

The relation $\rel{R}^{\tilde{\structure{A}}}$ is defined as follows:
\begin{equation}\label{eq:frombtotildeb}
\rel{R}^{\tilde{\structure{A}}}(\tilde{\tuple{t}}) = \begin{cases}\rel{R}^\structure{A}_i(\tuple{t}) & \text{if }\tilde{\tuple{t}}\in\family{A}_1,\\ 1 & \text{if } \tilde{\tuple{t}}\in\family{A}_2,\\ 0 & \text{otherwise.}\end{cases}
\end{equation}

Now we will prove one direction of the $\polynomial$-time equivalence. The size of $\tilde{\structure{A}}$ is polynomial in $|A|$, so the construction takes $\polynomial$-time, and below we show that $\structure{B}\to\structure{A}\Leftrightarrow\tilde{\structure{B}}\to\tilde{\structure{A}}.$

\begin{lemma}\label{lem:sigmatotildesigma}
$\MP_\star^\sigma(\structure{A})$ reduces in polynomial time to $\MP_\star^{\tilde{\sigma}}(\tilde{\structure{A}})$.
\end{lemma}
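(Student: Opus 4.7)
The plan is to apply the same three-class construction to $\structure{B}$, producing $\tilde{\structure{B}}$ with domain $B \sqcup \{c_B\}$ and partition $\family{B}_1, \family{B}_2, \family{B}_3$ of $\tilde{B}^{k+p-1}$ defined exactly as $\family{A}_1,\family{A}_2,\family{A}_3$ were; this is clearly polynomial in $|B|$. It then suffices to prove that $\structure{B} \to \structure{A}$ if and only if $\tilde{\structure{B}} \to \tilde{\structure{A}}$, restricting to the non-trivial case $p \geq 2$ (for $p=1$ the signatures coincide up to renaming and the claim is immediate).

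For the forward direction, given a homomorphism $h\colon\structure{B}\to\structure{A}$, set $\tilde{h}(c_B)=c_A$ and $\tilde{h}(b)=h(b)$ for $b\in B$. The key property is that $\tilde{h}$ sends $B$ into $A$ and $c_B$ to $c_A$, so the pattern of $c_B$-positions inside any tuple is exactly the pattern of $c_A$-positions inside its image. Consequently $\family{B}_j$-tuples are sent to $\family{A}_j$-tuples for each $j\in\{1,2,3\}$: on $\family{B}_1$ the labelling inequality reduces to $h$ being a homomorphism via \cref{eq:frombtotildeb}; on $\family{B}_2$ both sides carry label $1$, and on $\family{B}_3$ both sides carry label $0$.

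For the backward direction, given $\tilde{h}\colon\tilde{\structure{B}}\to\tilde{\structure{A}}$, I first argue $\tilde{h}(c_B)=c_A$. The all-$c_B$ tuple has label $1$, so its image, the all-$\tilde{h}(c_B)$ tuple, must have label $\succeq_\star 1$. If $\tilde{h}(c_B)=a\in A$, then the non-$c_A$ positions of the image form all of $[1,k+p-1]$, but since $p\geq 2$ no block $[i,i+k_i-1]$ spans the whole coordinate set; hence the image lies in $\family{A}_3$ with label $0$, contradicting $1\preceq_\star 0$. Next I show $\tilde{h}(B)\subseteq A$: assuming $\tilde{h}(b)=c_A$ for some $b\in B$, I produce a tuple $\tilde{\tuple{t}}\in\family{B}_3$ whose image is all-$c_A$. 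If $k\geq 2$, take $b$ at position $p+1$ and $c_B$ elsewhere (position $p+1$ exceeds every index $i\leq p$, and a length-one block exists only for arity-one relations at positions $i\leq p$, so $\tilde{\tuple{t}}\notin\family{B}_1\cup\family{B}_2$); if $k=1$, take $b$ at two distinct positions and $c_B$ elsewhere (every block has length $1$, so two non-$c_B$ positions cannot lie in the same block). In both sub-cases $\tilde{\tuple{t}}$ has label $0$, while its image is in $\family{A}_2$ with label $1$, contradicting the homomorphism condition.

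With $\tilde{h}(c_B)=c_A$ and $\tilde{h}(B)\subseteq A$, set $h:=\tilde{h}|_B$. For every $\relation{R}_i\in\sigma$ and $\tuple{t}\in B^{k_i}$, the associated $\family{B}_1$-tuple is mapped by $\tilde{h}$ (pattern-preservingly) to the $\family{A}_1$-tuple for $\relation{R}_i$ at position $i$ with inner tuple $h(\tuple{t})$, and reading off the labels yields $\relation{R}_i^{\structure{B}}(\tuple{t})\preceq_\star \relation{R}_i^{\structure{A}}(h(\tuple{t}))$. The main obstacle is this $\tilde{h}(B)\subseteq A$ step: the witnessing $\family{B}_3$-tuple must exist for every arity profile $(k_1,\ldots,k_p)$, which is why a split into $k\geq 2$ and $k=1$ is needed and why both sub-cases rely on $p\geq 2$.
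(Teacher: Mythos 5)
Your construction of $\tilde{\structure{B}}$, the forward direction, and the overall structure of the backward direction all match the paper's proof. One simplification you missed: to show $\tilde h(B)\cap\{c_A\}=\varnothing$ there is no need for the $k\geq 2$ / $k=1$ case split and the specially placed mixed tuple --- for any $b\in B$ the diagonal tuple $(b,\ldots,b)$ already lies in $\family{B}_3$ (its set of non-$c_B$ positions is all of $[k+p-1]$, whereas an $\family{B}_1$-block has length $k_i\le k < k+p-1$ once $p\geq 2$), hence carries label $0$, while its image would be the all-$c_A$ tuple of label $1$; this is exactly the paper's argument, and it is symmetric with the diagonal argument you already used for $c_B$. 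A smaller quibble: for $p=1$ the claim is not literally ``immediate'' since $\tilde{\structure{A}}$ differs from $\structure{A}$ by the added element $c_A$, so the biconditional $\structure{B}\to\structure{A}\Leftrightarrow\tilde{\structure{B}}\to\tilde{\structure{A}}$ still needs a short check; the paper also treats $p\ge 2$ implicitly, so this does not distinguish your proof from theirs.
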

\begin{proof}
Let $\structure{B}$ be an input instance of the problem $\MP_\star^{\sigma}(\structure{A})$. Suppose that there is $h\colon \structure{B}\to \structure{A}$ -- a homomorphism. We will show that $\tilde{h}\colon\tilde{B}\to\tilde{A}$ is a homomorphism, where
\begin{equation*}
    \tilde{h}(x) = \begin{cases}c_A &\text{if }x=c_B,\\ h(x)&\text{otherwise.}\end{cases}
\end{equation*}
Let $\family{B}_1,\family{B}_2$ be defined in a similar way to $\family{A}_1,\family{A}_2$. Consider $\tilde{\tuple{t}}=(c_B,\ldots,c_B,\tuple{t},c_B,\ldots,c_B)\in\family{B}_1$, where $\tuple{t} = (b_1,\ldots,b_{k_i})\in B^{k_i}$ and $k_i$ is the arity of $\rel{R}_i$ in $\sigma$. Then, the tuple $\tilde h(\tilde{\tuple{t}})=(c_A,\ldots,c_A,h(\tuple{t}),c_A,\ldots,c_A)$ is in $\family{A}_1$. As $h$ is a homomorphism, we have, by \cref{eq:frombtotildeb}, that:
\begin{displaymath}
\rel{R}^{\tilde{\structure{B}}}(\tilde{\tuple{t}}) = \rel{R}_i^\structure{B}(\tuple{t})\preceq_\star \rel{R}_i^\structure{A}(h(\tuple{t})) = \rel{R}^{\tilde{\structure{A}}}(\tilde h(\tilde{\tuple{t}}))
\end{displaymath}
For $\tilde{\tuple{t}}\in\family{B}_2$, we have that $\tilde h(\tilde{\tuple{t}})=(c_A,\ldots,c_A)$, so $\rel{R}^{\tilde{\structure{A}}}(\tilde h(\tilde{\tuple{t}}))=\rel{R}^{\tilde{\structure{B}}}(\tilde{\tuple{t}})=1.$
Let us consider a tuple $\tilde{\tuple{t}}=(x_1,\ldots,x_{k+p-1})\not\in\family{B}_1\sqcup\family{B}_2$. We know that $\tilde{h}(x)=c_A$ if and only if $x=c_B$, thus $\tilde h(\tilde{\tuple{t}})\not\in\family{A}_1\sqcup\family{A}_2$. Then $\rel{R}^{\tilde{\structure{A}}}(\tilde h(\tilde{\tuple{t}}))=\rel{R}^{\tilde{\structure{B}}}(\tilde{\tuple{t}})=0.$ We have shown that $\tilde{h}$ is a homomorphism.

Suppose now that there is a homomorphism $\tilde{h}\colon\tilde{\structure{B}}\to\tilde{\structure{A}}$. We know that $x=c_B$ if and only if
$\rel{R}^{\tilde{\structure{B}}}(x,\ldots,x)=1$, and otherwise $\rel{R}^{\tilde{\structure{B}}}(x,\ldots,x)=0$. A similar thing holds for $\tilde{\structure{A}}$. Thus, $x=c_B$ if and only if $\tilde{h}(x)=c_A$. This allows us to correctly construct $h\colon \structure{B}\to\structure{A}$, where,
for all $x\in B$, $h(x)=\tilde{h}(x)$.

For each $\rel{R}_i\in\sigma$ and $\tuple{t}\in B^{k_i}$, $\tuple{t}$ is associated with $\tilde{\tuple{t}}=(c_B,\ldots,c_B,\tuple{t},c_B\ldots,c_B)\in \family{B}_1$
and its image $h(\tuple{t})\in A^{k_i}$ is associated with $\tilde{h}(\tilde{\tuple{t}}) = (c_A,\ldots,c_A,h(\tuple{t}),c_A,\ldots,c_A)\in\family{A}_1$. We know, by the construction of $\tilde{\structure{A}}$ and $\tilde{\structure{B}}$, and by \cref{eq:frombtotildeb}, that:
\begin{displaymath}
\rel{R}_i^\structure{B}(\tuple{t}) = \rel{R}^{\tilde{\structure{B}}}(\tilde{\tuple{t}}) \preceq_\star \rel{R}^{\tilde{\structure{A}}}(\tilde h(\tilde{\tuple{t}})) = \rel{R}_i^\structure{A}(h(\tuple{t})).
\end{displaymath}
So, $h$ is a homomorphism and $\MP_\star^\sigma(\structure{A})$ reduces to $\MP_\star^{\tilde{\sigma}}(\tilde{\structure{A}})$.
\end{proof}

Now we have to find in polynomial time, for any input $(\symb{\star},\tilde{\sigma})$-structure $\tilde{\structure{G}}$ of $\MP_\star^{\tilde{\sigma}}(\structure{A})$, a $(\symb{\star},\sigma)$-structure $\structure{B}$ such that 
\begin{displaymath}
\tilde{\structure{G}}\to\tilde{\structure{A}}\Leftrightarrow\structure{B}\to\structure{A}.
\end{displaymath}

\begin{lemma}\label{lem:tildesigmatosigma}
$\MP_\star^{\tilde{\sigma}}(\tilde{\structure{A}})$ reduces in polynomial time to $\MP_\star^\sigma(\structure{A})$.
\end{lemma}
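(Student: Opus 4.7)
The plan is, given a $(\symb{\star},\tilde{\sigma})$-structure $\tilde{\structure{G}}$, to canonically extract a $(\symb{\star},\sigma)$-structure $\structure{B}$ on a subset of $\tilde{G}$ such that $\tilde{\structure{G}}\to\tilde{\structure{A}}$ iff $\structure{B}\to\structure{A}$. Assuming the nontrivial case $p\geq 2$ (for $p=1$ we have $\tilde{\sigma}=\sigma$ and the claim is essentially a tautology), the crucial observation is that in $\tilde{\structure{A}}$ the diagonal $\relation{R}^{\tilde{\structure{A}}}(x,\ldots,x)$ separates $c_A$ from $A$: it evaluates to $1$ on the single tuple $(c_A,\ldots,c_A)\in\family{A}_2$ and to $0$ on every tuple $(a,\ldots,a)$ for $a\in A$, since such a tuple falls into $\family{A}_3$ as soon as $k+p-1>k_i$ for all $i$. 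Because $0$, $1$ and $\star$ are pairwise distinguishable in $(P_\star,\preceq_\star)$, any homomorphism $\tilde{h}\colon\tilde{\structure{G}}\to\tilde{\structure{A}}$ must send each $x\in\tilde{G}$ with $\relation{R}^{\tilde{\structure{G}}}(x,\ldots,x)=1$ to $c_A$ and each $x$ with diagonal value $0$ into $A$; moreover, the presence of any $x$ with diagonal value $\star$ immediately precludes a homomorphism. This yields a canonical partition $\tilde{G}=C\sqcup B$.

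Having fixed $C$ and $B$, I would next perform a polynomial-time consistency check on every tuple $\tilde{\tuple{t}}\in\tilde{G}^{k+p-1}$ according to its $C/B$-pattern. Such a pattern matches an $\family{A}_1$-template precisely when the $B$-positions form a contiguous block $\{i,\ldots,i+k_i-1\}$ for some $i\in[p]$ (these templates do not overlap because $k+p-1>k_i$ for all $i$); it is the all-$C$ $\family{A}_2$-template when there are no $B$-positions; otherwise the image under any $\tilde{h}$ lies in $\family{A}_3$. Correspondingly, I reject the instance if any all-$C$ tuple has value different from $1$, or any non-template-matching tuple has value different from $0$. Assuming the check passes, I define $\structure{B}$ on domain $B$ by
\begin{displaymath}
\relation{R}_i^{\structure{B}}(\tuple{t}) = \sup\big\{\relation{R}^{\tilde{\structure{G}}}(\tuple{c},\tuple{t},\tuple{c}') : \tuple{c}\in C^{i-1},\ \tuple{c}'\in C^{k+p-k_i-i}\big\},
\end{displaymath}
the supremum being taken in $(P_\star,\preceq_\star)$: any nonempty finite subset of $\{0,1,\star\}$ has a supremum ($\star$ when it contains both $0$ and $1$ or contains $\star$, the unique element otherwise), so $\relation{R}_i^{\structure{B}}(\tuple{t})$ is well-defined. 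Since $\sigma$ is fixed, $k+p-1$ is constant and the construction runs in polynomial time. The degenerate case $C=\emptyset$ is treated directly: $\tilde{\structure{G}}\to\tilde{\structure{A}}$ iff every tuple of $\tilde{\structure{G}}$ has value $0$ and $A\neq\emptyset$, which is a polynomial-time test after which I output a fixed yes- or no-instance of $\MP_\star^\sigma(\structure{A})$.

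For the equivalence, given $h\colon\structure{B}\to\structure{A}$, the extension $\tilde{h}$ sending $c\in C$ to $c_A$ and $b\in B$ to $h(b)$ is a homomorphism: the consistency check handles all-$C$ and non-template-matching tuples, while on an $\family{A}_1$-template tuple $(\tuple{c},\tuple{t},\tuple{c}')$ the defining supremum gives $\relation{R}^{\tilde{\structure{G}}}(\tuple{c},\tuple{t},\tuple{c}')\preceq_\star \relation{R}_i^{\structure{B}}(\tuple{t})\preceq_\star \relation{R}_i^{\structure{A}}(h(\tuple{t}))$. Conversely, given $\tilde{h}\colon\tilde{\structure{G}}\to\tilde{\structure{A}}$, the self-loop argument forces $\tilde{h}(C)=\{c_A\}$ and $\tilde{h}(B)\subseteq A$, and the restriction $h=\tilde{h}|_B$ satisfies $\relation{R}_i^{\structure{B}}(\tuple{t})\preceq_\star \relation{R}_i^{\structure{A}}(h(\tuple{t}))$ term by term in the supremum. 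The main obstacle I anticipate is the bookkeeping in the consistency check: correctly classifying every tuple of $\tilde{G}^{k+p-1}$ by its $C/B$-pattern and handling the degenerate cases ($C=\emptyset$, $A=\emptyset$, $p=1$) coherently so that both directions of the equivalence go through without missing a case.
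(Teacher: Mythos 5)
Your proof is correct and takes essentially the same approach as the paper's: both exploit the diagonal of $\relation{R}^{\tilde{\structure{A}}}$ to force the $C/B$-split, run the same pattern-based consistency checks (all-$C$ tuples must be $1$, non-template tuples must be $0$), and extract $\structure{B}$ via the join over the $C$-placeholder positions --- the paper merely packages the construction through an intermediate $\tilde{\sigma}$-structure $\tilde{\structure{B}}$ obtained by collapsing $C_1$ to a single element $c_B$ before reading off $\structure{B}$, which amounts to the same formula. The only slip is your aside that $p=1$ makes the claim a tautology: even then $\tilde{\structure{A}}$ carries the extra element $c_A$, so $\tilde{\structure{A}}\neq\structure{A}$ and the diagonal argument genuinely fails; but since the reduction is only of interest for $p\geq 2$ this does not affect the substance.
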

\begin{proof}
Let $\tilde{\structure{G}}$ be an input instance of $\MP_\star^{\tilde{\sigma}}(\tilde{\structure{A}})$. Firstly, for every element $x\in\tilde{G}$, we check
whether $\rel{R}^{\tilde{\structure{G}}}(x,\ldots,x)=\star$. If such an $x$ exists, then we cannot map $\tilde{\structure{G}}$ to $\tilde{\structure{A}}$
as, for all $y\in\tilde{A}$, we have that $\rel{R}^{\tilde{\structure{A}}}(y,\ldots,y)\in\{0,1\}$. This can be checked in time linear in $|\tilde{G}|$. In
this case, we output some fixed NO input instance of $\MP_\star^\sigma(\structure{A})$, e.g., some $\structure{B}$ where there is $b\in B$ and
$\rel{R}_i^\structure{B}(b,\ldots,b)=\star$ for all $\rel{R}_i\in \sigma$.

Now we can assume that, for all $x\in\tilde{G}$, $\rel{R}^{\tilde{\structure{G}}}(x,\ldots,x)\in\{0,1\}$.
We partition the elements of $\tilde{G}$ into two sets $\{C_0,C_1\}$, where, for each $x\in \tilde{G}$,
\begin{equation}\label{eq:definingc1}
x\in C_i \Leftrightarrow \rel{R}^{\tilde{\structure{G}}}(x,\ldots,x) = i.
\end{equation}
As, for all $a\in \tilde{A}\setminus \{c_A\}$, $\rel{R}^{\tilde{\structure{A}}}(a,\ldots,a)=0$, the existence of a homomorphism $h\colon \tilde{\structure{G}}\to\tilde{\structure{A}}$ implies that, for all $x\in \tilde{G}$, we have $h(x) = c_A \Leftrightarrow x\in C_1$. We are going to construct a $\tilde{\sigma}$-structure $\tilde{\structure{B}}$ with the following properties:
\begin{enumerate}
        \item $\tilde{\structure{G}}\to\tilde{\structure{B}}$;
        \item $\tilde{\structure{G}}\to\tilde{\structure{A}}\Leftrightarrow\tilde{\structure{B}}\to\tilde{\structure{A}}$;
        \item Either we can check in $\polynomial$-time that $\tilde{\structure{B}}\not\to\tilde{\structure{A}}$ or there exists a $\sigma$-structure $\structure{B}$ such that $\tilde{\structure{B}}$ is obtained from $\structure{B}$ by the construction from the beginning of Subsection~\ref{subsec:many_to_one}.
\end{enumerate}

The domain $\tilde{B}$ is $C_0\sqcup\{c_B\}$. The element $c_B$ should be considered as the result of identifying all the elements in $C_1$ into a single element, namely $c_B$.
    
Let us consider a tuple $\tilde{\tuple{t}}=(b_1,\ldots,b_{k+p-1})\in\tilde{B}^{k+p-1}$. Denote by $\family{I}_{\tilde{\tuple{t}}}\subseteq[k+p-1]$ the set of indices such that $b_i=c_B$. Denote by $\family{C}_{\tilde{\tuple{t}}}$ the set of all tuples $(x_1,\ldots,x_{k+p-1})\in\tilde{G}^{k+p-1}$ such that, for each $i\in[k+p-1]$, 
\begin{equation*}
\left(i\in\family{I}_{\tilde{\tuple{t}}}\Rightarrow x_i \in C_1\right) \wedge \left(i\notin\family{I}_{\tilde{\tuple{t}}}\Rightarrow b_i=x_i\right). 
\end{equation*}
The interpretation $\rel{R}^{\tilde{\structure{B}}}$ is defined as follows, here $\bigvee$ denotes the join operation w.r.t. $\preceq_{\star}$:
\begin{equation}\label{eq:identifyingc1}
\rel{R}^{\tilde{\structure{B}}}(\tilde{\tuple{t}}) = \bigvee_{(x_1,\ldots,x_{k+p-1})\in\family{C}_{\tilde{\tuple{t}}}}\rel{R}^{\tilde{\structure{G}}}(x_1,\ldots,x_{k+p-1}).
\end{equation}
Observe that we can construct $\tilde{\structure{B}}$ in time polynomial in the size of the input $\tilde{\structure{G}}$.
    
Let us check the property 1: $\tilde{\structure{G}}\to\tilde{\structure{B}}$. Let $\pi\colon\tilde{G}\to\tilde{B}$ be a mapping such that
\begin{equation*}
\pi(x) = \begin{cases}c_B & \text{if } x\in C_1,\\ x & \text{if } x\in C_0.\end{cases}
\end{equation*}
Let $\tilde{\tuple{x}}=(x_1,\ldots,x_{k+p-1})$ be a tuple in $\tilde{G}^{k+p-1}$.
As $\tilde{\tuple{x}}\in\family{C}_{\pi(\tilde{\tuple{x}})}$, we have, by \cref{eq:identifyingc1}, that $\rel{R}^{\tilde{\structure{G}}}(\tilde{\tuple{x}})\preceq_\star \rel{R}^{\tilde{\structure{B}}}(\pi(\tilde{\tuple{x}}))$. This proves that $\pi$ is a homomorphism.

Let us check the property 2: $\tilde{\structure{G}}\to\tilde{\structure{A}}\Leftrightarrow\tilde{\structure{B}}\to\tilde{\structure{A}}$. Since $\tilde{\structure{G}}\to\tilde{\structure{B}}$ (property 1), we need to show only the ``$\Rightarrow$'' direction.  Assume that there is
$h_G\colon\tilde{\structure{G}}\to\tilde{\structure{A}}$ -- a homomorphism. Notice that, for all $x\in G$, we have that  $x\in C_1\Leftrightarrow h_G(x)=c_A$. We define a mapping $h_B\colon \tilde{B}\to\tilde{A}$ such that
\begin{equation*}
h_B(x) = \begin{cases}c_A & \text{if } x=c_B\\ h_G(x) & \text{otherwise.}\end{cases}
\end{equation*}
Let $\tilde{\tuple{t}}=(b_1,\ldots,b_{k+p-1})$ be a tuple in $\tilde{B}^{k+p-1}$. 
Observe that, for each tuple $\tuple{x}\in\family{C}_{\tilde{\tuple{t}}}$, we have $h_B(\tilde{\tuple{t}})=h_G(\tuple{x})$.
We also know that $\rel{R}^{\tilde{\structure{A}}}(h_B(\tilde{\tuple{t}}))\succeq_\star \rel{R}^{\tilde{\structure{G}}}(x_1,\ldots,x_{k+p-1})$ for all $(x_1,\ldots,x_{k+p-1})\in\family{C}_{\tilde{\tuple{t}}}$. Thus, 
\begin{displaymath}
\rel{R}^{\tilde{\structure{A}}}(h_B(\tilde{\tuple{t}}))\succeq_\star \bigvee_{(x_1,\ldots,x_{k+p-1})\in \family{C}_{\tilde{\tuple{t}}}}\rel{R}^{\tilde{\structure{G}}}(x_1,\ldots,x_{k+p-1}) = \rel{R}^{\tilde{\structure{B}}}(\tilde{\tuple{t}}).
\end{displaymath}
This shows that $h_B$ is a homomorphism.

Finally, we need to check the property 3 to finish the proof. Recall that we split all the tuples $(b_1,\ldots,b_{k+p-1})\in\tilde{B}^{k+p-1}$ into three
classes: $\family{B}_1,\family{B}_2$, and the rest. Observe that, for each homomorphism $h:\tilde{\structure{B}}\to
\tilde{\structure{A}}$ and each $x$ in $\tilde{B}$, we have that $x=c_B$ if and only if $h(x)=c_A$. In particular,
  $h(\family{B}_1)\subseteq \family{A}_1$ and $h(\family{B}_2)\subseteq \family{A}_2$. We first look at the tuple $\tilde{\tuple{t}}=(c_B,\ldots,c_B)\in\family{B}_2$. By \cref{eq:definingc1,eq:identifyingc1}, we know that $\rel{R}^{\tilde{\structure{B}}}(\tilde{\tuple{t}})\succeq_\star 1$. If $\rel{R}^{\tilde{\structure{B}}}(\tilde{\tuple{t}})=\star$, then there is no homomorphism from $\tilde{\structure{B}}$ to $\tilde{\structure{A}}$, so we output some fixed NO input instance of $\MP_\star^\sigma(\structure{A})$ for $\tilde{\structure{G}}$. If $\rel{R}^{\tilde{\structure{B}}}(\tilde{\tuple{t}})=1$, then we continue. 
    
If there exists $\tilde{\tuple{t}}\not\in\family{B}_1\sqcup\family{B}_2$ such that $\rel{R}^{\tilde{\structure{B}}}(\tilde{\tuple{t}})\not=0$, then there is no homomorphism from $\tilde{\structure{B}}$ to $\tilde{\structure{A}}$, so we output some fixed NO input instance of $\MP_\star^\sigma(\structure{A})$ for $\tilde{\structure{G}}$. If, for all tuples $\tilde{\tuple{t}}\not\in\family{B}_1\sqcup\family{B}_2$, we have that $\rel{R}^{\tilde{\structure{B}}}(\tilde{\tuple{t}})=0$, then we continue. We can do all these checks in time polynomial in $|\tilde{G}|$.
    
We can now assume that $\rel{R}^{\tilde{\structure{B}}}(c_B,\dots,c_B)=1$ and that, for $\tilde{\tuple{t}}\not\in\family{B}_1\sqcup\family{B}_2$, $\rel{R}^{\tilde{\structure{B}}}(\tilde{\tuple{t}})=0$. We are ready to construct the $(\symb{\star},\sigma)$-structure $\structure{B}$:
\begin{itemize}
        \item the domain $B$ of $\structure{B}$ is $\tilde{B}\setminus\{c_B\}$;
        \item for each relation $\rel{R}_i\in\sigma$ and each tuple $\tuple{t}=(b_1,\ldots,b_{k_i})\in B^{k_i}$, let
        \begin{equation*}
        \rel{R}^{\structure{B}}_i(\tuple{t})=\rel{R}^{\tilde{\structure{B}}}(\underbrace{c_B,\ldots,c_B}_{i-1},\tuple{t},\underbrace{c_B,\ldots,c_B}_{k+p-k_i-i}).
        \end{equation*}
\end{itemize}

It can be easily checked that $\tilde{\structure{B}}$ is obtained from $\structure{B}$ by the construction from the beginning of Subsection~\ref{subsec:many_to_one}. By Lemma~\ref{lem:sigmatotildesigma}, $\structure{B}\to\structure{A}$ if and only if $\tilde{\structure{B}}\to\tilde{\structure{A}}$.  We have shown that, for
  each $(\symb{\star},\tilde{\sigma})$-structure $\tilde{\structure{G}}$, we can find in time polynomial in $|\tilde{G}|$ a
  $(\symb{\star},\sigma)$-structure $\structure{B}$ such that
  $\tilde{\structure{G}}\to\tilde{\structure{A}}\Leftrightarrow \tilde{\structure{B}}\to\tilde{\structure{A}}$.  Thus,
  $\MP_\star^{\tilde{\sigma}}(\tilde{\structure{A}})$ reduces in polynomial time to $\MP_\star^\sigma(\structure{A})$.
\end{proof}
Lemma~\ref{lem:sigmatotildesigma} and Lemma~\ref{lem:tildesigmatosigma} provide the following statement about the dichotomy property.
\begin{theorem}\label{thm:manytoone}
If the class of problems $\MP_\star^{\tilde{\sigma}}$ has a dichotomy, then the class $\MP_\star^{\sigma}$ has a dichotomy.
\end{theorem}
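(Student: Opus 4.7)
The proof is essentially a direct consequence of the two preceding lemmas, so the plan is to stitch them together and invoke the definition of $\polynomial$-time equivalence between classes of problems.

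First I would fix an arbitrary $(\symb{\star},\sigma)$-structure $\structure{A}$ and recall the construction of the associated $(\symb{\star},\tilde{\sigma})$-structure $\tilde{\structure{A}}$ given just before \Cref{lem:sigmatotildesigma}, whose domain is $A\sqcup\{c_A\}$ and whose single $(k+p-1)$-ary relation $\relation{R}^{\tilde{\structure{A}}}$ encodes each original $\relation{R}_i$ by padding tuples with $c_A$'s in the appropriate positions. By construction $\tilde{\structure{A}}$ is computable in polynomial time from $\structure{A}$, and conversely given any $(\symb{\star},\tilde{\sigma})$-structure $\tilde{\structure{A}}$ one reads off from the self-loop values (as in the proof of \Cref{lem:tildesigmatosigma}) whether it comes from a $(\symb{\star},\sigma)$-structure $\structure{A}$ via the $\tilde{(\cdot)}$-transformation, so the correspondence $\structure{A}\mapsto\tilde{\structure{A}}$ is computable both ways in polynomial time.

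Next I would combine \Cref{lem:sigmatotildesigma} and \Cref{lem:tildesigmatosigma}: together they state that $\MP_\star^\sigma(\structure{A})$ and $\MP_\star^{\tilde{\sigma}}(\tilde{\structure{A}})$ are $\polynomial$-time equivalent decision problems. Hence the families $\MP_\star^\sigma$ and $\MP_\star^{\tilde{\sigma}}$ are $\polynomial$-time equivalent in the sense recalled at the end of \Cref{sec:preliminaries}: for every $\MP_\star^\sigma(\structure{A})\in\MP_\star^\sigma$ one exhibits in polynomial time the equivalent problem $\MP_\star^{\tilde{\sigma}}(\tilde{\structure{A}})\in\MP_\star^{\tilde{\sigma}}$, and conversely every $\MP_\star^{\tilde{\sigma}}(\tilde{\structure{B}})$ is, modulo the polynomial-time check described in \Cref{lem:tildesigmatosigma} distinguishing whether $\tilde{\structure{B}}$ is in the image of the $\tilde{(\cdot)}$-transformation, polynomially equivalent to some $\MP_\star^\sigma(\structure{B})$.

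Finally I would invoke the general principle, stated in the footnote of \Cref{sec:preliminaries}, that $\polynomial$-time equivalence of two classes transfers the dichotomy property from one to the other. Concretely, assuming $\MP_\star^{\tilde{\sigma}}$ has a dichotomy, every $\MP_\star^\sigma(\structure{A})$ is $\polynomial$-time equivalent to some $\MP_\star^{\tilde{\sigma}}(\tilde{\structure{A}})$, which is either in $\polynomial$ or is $\NPc$, and hence so is $\MP_\star^\sigma(\structure{A})$. There is no real obstacle to overcome here, as all the technical work has already been done in the construction of $\tilde{\structure{A}}$ and in the two preceding lemmas; the only subtlety worth spelling out explicitly is the verification that the correspondence $\structure{A}\leftrightarrow\tilde{\structure{A}}$ is indeed a bijection (up to the harmless polynomial-time preprocessing removing $\tilde{\structure{B}}$'s that are manifestly not in the image), ensuring that the two families of problems are equivalent in the class-level sense rather than merely problem-by-problem.
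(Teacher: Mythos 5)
Your final paragraph gives exactly the paper's argument: for every $(\symb{\star},\sigma)$-structure $\structure{A}$, construct $\tilde{\structure{A}}$ in polynomial time, use \Cref{lem:sigmatotildesigma} and \Cref{lem:tildesigmatosigma} to conclude that $\MP_\star^\sigma(\structure{A})$ and $\MP_\star^{\tilde{\sigma}}(\tilde{\structure{A}})$ are $\polynomial$-time equivalent, and then transfer the assumed dichotomy of $\MP_\star^{\tilde{\sigma}}$ to $\MP_\star^\sigma(\structure{A})$. This is correct and is all the theorem requires, since it is stated as a one-way implication.

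However, your middle paragraph contains a genuine error. You assert that the families $\MP_\star^\sigma$ and $\MP_\star^{\tilde{\sigma}}$ are $\polynomial$-time equivalent in the class-level sense, and that every $\MP_\star^{\tilde{\sigma}}(\tilde{\structure{B}})$ is polynomially equivalent to some $\MP_\star^\sigma(\structure{B})$ up to a ``harmless polynomial-time preprocessing removing $\tilde{\structure{B}}$'s that are manifestly not in the image''. This is not established and is not harmless: the map $\structure{A}\mapsto\tilde{\structure{A}}$ is far from surjective onto $(\symb{\star},\tilde{\sigma})$-structures, and when a target $\tilde{\sigma}$-structure $\structure{B}$ is not in its image, neither of the two lemmas exhibits any $\sigma$-structure $\hat{\structure{A}}$ with $\MP_\star^{\tilde{\sigma}}(\structure{B})$ polynomially equivalent to $\MP_\star^\sigma(\hat{\structure{A}})$. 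The polynomial-time check in \Cref{lem:tildesigmatosigma} applies to \emph{input instances} $\tilde{\structure{G}}$ of a fixed problem $\MP_\star^{\tilde{\sigma}}(\tilde{\structure{A}})$, not to \emph{target} structures, and gives you no way to dispense with non-image targets. In fact the paper explicitly points this out right after the theorem: the converse correspondence, from arbitrary $\tilde{\sigma}$-targets back to $\sigma$-targets, is exactly the reduction whose difficulty the remainder of the section is devoted to explaining. So the correct reading is that the two lemmas give only the one-way class-level correspondence; the claim of full two-way equivalence should be dropped, even though dropping it does not change your final conclusion.
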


Our results from Sections~\ref{sec:mpstar=mp} and~\ref{sec:arity} are summarised on Figure~\ref{fig:arities_diagram}. One can see now that the existence of a dichotomy for $\MP^{\tilde{\sigma}}$ implies a similar dichotomy for all other classes considered on the figure.
Observe that in order to prove the other direction, for every $(\symb{\star},\tilde{\sigma})$-structure $\structure{A}$, we have to find a
$(\symb{\star},\sigma)$-structure $\hat{\structure{A}}$ such that $\MP_\star^{\sigma}(\hat{\structure{A}})$ and $\MP_\star^{\tilde{\sigma}}(\structure{A})$
are $\polynomial$-time equivalent. We discuss in the next subsection why this problem is difficult.

\begin{figure}
\begin{center}
\begin{tikzcd}
\MP_\star \arrow[d, leftrightarrow] & \MP_\star^{\tilde{\sigma}} \arrow[d, leftrightarrow] \arrow[r] & \MP_\star^\sigma \arrow[ll, bend right] \arrow[d, leftrightarrow]\\
\MP & \MP^{\tilde{\sigma}} & \MP^\sigma
\end{tikzcd}
\end{center} 
\caption{Dichotomy implications. Each arrow shows an implication of the existence of a dichotomy, i.e., if the class at the tail has a dichotomy, then the class at the head has it. The vertical ones are shown in \Cref{sec:mpstar=mp}, and the horizontal ones are shown in \Cref{sec:arity}.}
\label{fig:arities_diagram}
\end{figure}

\subsection{From one relation to directed graphs}\label{subsec:onetobinary}

Before the dichotomy question for finite $\CSP$ was solved, it had already been known that the choice of the relational signature did not matter. This was implied by the result of Feder and Vardi~\cite{federvardi1998} who showed that, for every finite $\CSP$, there is a $\polynomial$-time equivalent $\CSP$ on directed graphs. For more details about the reductions, see Bulin et al.~\cite{bulin2015}.

The transformation consisted of two steps. At first, a $\CSP$ over some signature $\sigma$ was transformed to an equivalent $\CSP$ over a signature $\tilde{\sigma}$ with a single relation symbol of some arity possibly greater than 2.  Secondly, the interim $\CSP^{\tilde{\sigma}}$ was transformed to an equivalent $\CSP$ on directed graphs. The resulting digraph was obtained by replacing relational tuples with copies of some ``gadget'' digraph.

Notice that, regarding the first step, we showed a similar result for $\MP$ in Subsection~\ref{subsec:many_to_one}. We do not prove the second step in this subsection. However, we generalise the transformation from~\cite{federvardi1998,bulin2015} by providing a list of necessary conditions that describe this type of transformations. We will show that every transformation that satisfies these conditions provides a reduction from an $\MP_\star$ problem to another $\MP_\star$ problem that is $\polynomial$-time equivalent to some $\MP_\varnothing$ problem. As $\MP_\varnothing$ has a dichotomy, by Corollary~\ref{cor:mptocsp}, the possible existence of a backwards reduction would imply a dichotomy for $\MP_\star$ over any finite relational signature.

To simplify the notation, we consider a signature with a ternary relation symbol: $\tilde{\sigma} = \{\rel{R}(\cdot,\cdot,\cdot)\}$. Let $\structure{H}_3$ be some $(\star,\tilde{\sigma})$-structure. Below, we formally describe how should a transformation look like. At first, we explain how to obtain a $\star$-graph $\structure{H}_2$ from $\structure{H}_3$. After that, we give a list of conditions that we impose on the $\star$-graph $\structure{H}_2$ and on its ``gadgets''. Finally, we state Proposition~\ref{prop:twotoincomplete} that shows why this exact type of transformations does not work well in the case of Matrix Partitions. We put the proof into the appendix for the interested reader.

The $\star$-graph $\structure{H}_2$ is constructed from a $(\star,\tilde{\sigma})$-structure $\structure{H}_3$ as follows.
\begin{enumerate}
    \item First take the same domain $H_2:=H_3$.
    \item Then, for every value $v\in\{0,1,\star\}$, define a \emph{gadget} $\star$-graph $\structure{T}^v$.
    \item Then, for every tuple $\tuple{t} = (x_1,x_2,x_3)$ such that $\rel{R}^{\structure{H}_3}(\tuple{t})=v$, replace it with a copy of $\structure{T}^v$, denoted by $\structure{T}_\tuple{t}^v$,  such that this copy contains $x_1,x_2,x_3$, while its other elements are newly introduced.
\end{enumerate}

In the $\CSP$ case~\cite{federvardi1998,bulin2015}, every such gadget was a digraph obtained from a rooted tree with three leaves, where the leaves were the elements $x_1,x_2,x_3$ of $H_3$. During the reduction from $\CSP$ on directed graphs to $\CSP^{\tilde{\sigma}}$, it was clear which elements of the input directed graph
must be the elements of the domain of the $\tilde{\sigma}$-structure to which this directed graph was reduced. We generalise this constructive approach by the
conditions imposed on $\structure{H}_2$ and on the gadgets $\structure{T}^v$. At first, we list the conditions for the target structure $\structure{H}_2$.
\begin{enumerate}
    \item \emph{The $H_3$-part is preserved by homomorphisms.} Let $\structure{H}_2,\structure{H}_2'$ be two $\star$-graphs obtained from $(\symb{\star},\tilde{\sigma})$-structures
      $\structure{H}_3,\structure{H}_3'$. Then, any homomorphism $h\colon \structure{H}_2\to\structure{H}_2'$ maps the subset $H_3$ to $H_3'$, i.e., for all $x$ in $H_2$, we have that $x\in H_3 \Leftrightarrow h(x)\in H_3'$. 
    \item \emph{It is easy to single out the $H_3$-part of the input.} For any $\star$-graph $\structure{G}$, one of the two following statements must hold.
    \begin{itemize}
        \item One can decide in $\polynomial$-time in $|G|$ if $\structure{G}\to\structure{H}_2$.
        \item For every $x\in G$ and every two homomorphisms $h_1,h_2\colon \structure{G}\to\structure{H}_2$, we have that $h_1(x)\in H_3$ if and only if $h_2(x)\in H_3$. Moreover, for every $x\in G$, one can decide in $\polynomial$-time in $|G|$ if, for every homomorphism $h\colon \structure{G}\to\structure{H}$, we have that $h(x)$ belongs to $H_3$.
    \end{itemize} 
\end{enumerate}

Next, we list the conditions for the gadgets $\structure{T}^v$ of $\structure{H}_2$.

\begin{enumerate}
\item[3.] \emph{Gadgets have tractable $\CSP$s and respect the partial order.} For each $v\in\{0,1,\star\}$, the problem $\MP_\star(\structure{T}^v)$ is solvable in $\polynomial$-time and, for each $v,v'\in\{0,1,\star\}$, we have that $v\preceq_\star v'$ if and only if $\structure{T}^{v}\to\structure{T}^{v'}$.
\item[4.] \emph{Copies of gadgets touch only by elements of $H_3$.} For two elements $w,w'\in H_2$ such that $w,w'\notin H_3$ and $w,w'$ do not belong to the same gadget $\structure{T}^v_{xyz}$, we have that $\rel{E}^{\structure{H}_2}(w,w') = 0$.
\item[5.] \emph{For connected inputs, the coordinates of $H_3$-elements are uniquely determined.} Call a $\star$-graph $\structure{A}$ \emph{connected} if, for every $a,a'\in A$, there is a sequence of elements $a_1,\ldots,a_n$ such that $a=a_1,a'=a_n$, and, for $i\in[n-1]$, one of $\rel{E}^\structure{A}(a_i,a_{i+1}),\rel{E}^\structure{A}(a_{i+1},a_i)$ belongs to $\{1,\star\}$. Then, for every connected $\star$-graph $\structure{A}$, one of the two following statements must hold.
\begin{itemize}
    \item For every $v\in \{0,1,\star\}$, there is a homomorphism $h\colon \structure{A}\to\structure{T}^v_{xyz}$.
    \item For every $v\in\{0,1,\star\}$ and every two homomorphisms $h,h'\colon \structure{A}\to\structure{T}_{xyz}^v$ and every $a\in A$ such that $h(a)\in\{x,y,z\}$, we have that $h(a)=h'(a)$.
\end{itemize}
\end{enumerate}

The transformation from~\cite{federvardi1998,bulin2015} satisfies all the five conditions. However, the third condition is not really applicable for the $\CSP$ case because there is only one gadget type. In the following proposition we state that, for every construction of $\structure{H}_2$ that satisfies these five conditions, proving that $\MP_\star^{\tilde{\sigma}}(\structure{H}_3)$ and $\MP_\star(\structure{H}_2)$ are $\polynomial$-time equivalent is at least as hard as proving the existence of a dichotomy for $\MP_\star^{\tilde{\sigma}}$, as, by Corollary~\ref{cor:mptocsp}, $\MP_\varnothing^{\tilde{\sigma}}$ has a dichotomy.

\begin{proposition}\label{prop:twotoincomplete}
  Let a $\star$-graph $\structure{H}_2$ be constructed from some $(\star,\tilde{\sigma})$-structure $\structure{H}_3$ and satisfy all the five
  conditions above. Then, $\MP_\varnothing^{\tilde{\sigma}}(\structure{H}_3)$ and $\MP_\star(\structure{H}_2)$ are $\polynomial$-time equivalent.
\end{proposition}

\section{Obstructions}\label{sec:obstructions}

Throughout this section,  assume that $\sigma$ is a fixed finite relational signature and that, for $*\in\{01,\star,\varnothing\}$, $\Cat_*$
denotes the set of all $(\symb{*},\sigma)$-structures.
The following definition extends the notion of obstructions~\cite{federhell2007} to $(\symb{*},\sigma)$-structures.
\begin{definition}[Obstruction set]
Let $* \in \{01,\star,\varnothing\}$ and let $\structure{H}$ be a $\star$-structure.  A $*$-structure
  $\structure{G}$ is called a \emph{minimal obstruction} for $\MP_*(\structure{H})$ if $\structure{G}\not\to\structure{H}$ and, for all $v\in G$,
  $\structure{G}\setminus\{v\}\to\structure{H}$. The set of all minimal obstructions for  $\MP_*(\structure{H})$ is denoted by
  $\Obs_*^\subset(\structure{H})$.
\end{definition}

The following extends the notion of finite duality~\cite{atserias2008} to $(\symb{*},\sigma)$-structures.

\begin{definition}[Finite duality]
A set $\family{F}$ of $*$-structures is a \emph{duality set} for the problem $\MP_*(\structure{H})$ if
\begin{align*}
  \structure{G}\in \MP_*(\structure{H})  \Longleftrightarrow \text{for all $\structure{F}\in\family{F}$,}\;\structure{F}\not\to \structure{G}
\end{align*}
If, moreover, the set $\family{F}$ is finite, then $\MP_*(\structure{H})$ has \emph{finite duality}.
\end{definition}

We prove that, for each of the three problems $\MP(\structure{H}),\MP_\star(\structure{H}),\MP_\varnothing(\structure{H})$, having a finite duality and having a finite set of minimal obstructions is the same, that $\MP(\structure{H})$ and $\MP_\star(\structure{H})$ agree on having this property, and that the existence of such a characterisation for $\MP_\varnothing(\structure{H})$ yields the same result for both cases $01$ and $\star$. Our results are depicted on the diagram below, where each arrow stands for a logical implication.

\begin{center}
\begin{tikzcd}\label{fig:obs_diagram}

\vert \Obs_{01}^\subset(\structure{H})\vert <\infty \arrow[d, "\textrm{Cor.}~\ref{cor:homimpliesinclusion}", leftrightarrow] \arrow[r,"\textrm{Prop.}\ref{prop:obs_01_obs_star}", leftrightarrow]  &  
\vert \Obs_\star^\subset(\structure{H})\vert <\infty \arrow[d, "\textrm{Cor.}\ref{cor:obs_star_fd}", leftrightarrow]  & 
\vert \Obs_\varnothing^\subset(\structure{H})\vert <\infty  \arrow[d,"\textrm{Prop.}\ref{prop:empty_fd_obs}\textrm{,}\ref{prop:empty_obs_fd}", leftrightarrow] \arrow[l,"\textrm{Prop.}\ref{prop:obs_empty_obs_star}"]\\
\MP(\structure{H})\text{ has f.d.} \arrow[r, "\textrm{Cor.}\ref{cor:hommin-star-01}", leftrightarrow] & 
\MP_\star(\structure{H})\text{ has f.d.} & 
\MP_\varnothing(\structure{H})\text{ has f.d.}\arrow[l,"\textrm{Prop.}\ref{prop:mp_empty_mp_star}"]

\end{tikzcd}
\end{center}

\subsection{Looking at $\Cat_{01}$ and at $\Cat_{\star}$}

Let us first prove that, in $\Cat_{01}$, the minimal obstruction set is also a duality set.

\begin{proposition}\label{prop:homimpliesinclusion} 
$\Obs_{01}^{\subset}(\structure{H})$ is a duality set for $\MP(\structure{H})$. Moreover, among all duality sets,  $\Obs_{01}^{\subset}(\structure{H})$ is the minimal one by inclusion.
\end{proposition}

\begin{proof}
Let $\structure{A}$ be in $\Cat_{01}$ such that $\structure{A}\not\to\structure{H}$. We start by iteratively removing arbitrary elements from
  $\structure{A}$ until the substructure induced by remaining elements is a minimal obstruction, i.e., if we remove any element, then the resulting structure will map to $\structure{H}$. Such a substructure belongs to $\Obs_{01}^\subset(\structure{H})$ and maps to $\structure{A}$, thus, $\Obs_{01}^\subset(\structure{H})$ is a duality set.

Let $\family{F}$ be a duality set for $\MP(\structure{H})$ such that $|\family{F}|\leq|\Obs_{01}^\subset(\structure{H})|$. We can assume without loss of generality that $\family{F}\subseteq \Obs_{01}^\subset(\structure{H})$: any $\structure{F}$ in $\family{F}$ has an induced substructure that belongs to $\Obs_{01}^\subset(\structure{H})$, so we can substitute this substructure for $\structure{F}$.
Observe that, by definition, all structures of $\Obs_{01}^\subset(\structure{H})$ are cores. Let $\structure{G}$ be in $\Obs_{01}^\subset(\structure{H})\setminus\family{F}$. As $\structure{G}\not\to\structure{H}$, there exists $\structure{G}_1$ in $\family{F}$ such that there is a homomorphism $h\colon \structure{G}_1\to \structure{G}$.
Let $G'= h(G_1)$ and let $\structure{G'}$ be the substructure of $\structure{G}$ induced by $G'$. If $\structure{G'}$ is a proper induced  substructure of $\structure{G}$, then by the assumption of minimality by inclusion, and by transitivity of homomorphism, $\structure{G}_1\to \structure{H}$ -- a  contradiction. Thus, $h(G_1) = G$, but since $h$ is a full homomorphism, $\structure{G}$ is either a proper induced substructure of $\structure{G}_1$ or isomorphic to it. The first one is impossible because $\structure{G}_1$ is a core. Thus, $\structure{G}_1$ is isomorphic to $\structure{G}$ which implies that $\structure{G}\in\family{F}$, a contradiction.
\end{proof}
\begin{corollary}\label{cor:homimpliesinclusion}
$|\Obs_{01}^{\subset}(\structure{H})|<\infty$ if and only if $\MP(\structure{H})$ has finite duality.
\end{corollary}

Observe that the result of Proposition~\ref{prop:homimpliesinclusion} does not hold in $\Cat_\star$.

\begin{proposition}\label{prop:obs-star1} 
For every $01$-structure $\structure{H}$, $\Obs_\star^{\subset}(\structure{H})$ is not a duality set of minimal size. 
\end{proposition}

\begin{proof} We give the proof for $\star$-graphs, the proof for arbitrary signatures is similar. Choose some vertex $x$ from the domain $H$ of $\structure{H}$. Consider a $\star$-graph $\structure{G} = (\{u,v\}, \rel{E}^\structure{G})$ with
  $\rel{E}^\structure{G}(u,u)=\rel{E}^\structure{G}(v,v) = \rel{E}^{\structure{H}}(x,x)$ and
  $\rel{E}^\structure{G}(u,v) = \rel{E}^\structure{G}(v,u) = \star$. Also consider a $\star$-graph $\structure{G}'$ obtained from
  $\structure{G}$ by setting $\rel{E}^{\structure{G'}}(v,u)=0$, and keeping the rest as in $\structure{G}$.

Both $\structure{G}$ and $\structure{G}'$ belong  to  $\Obs_\star^{\subset}(\structure{H})$  as $\structure{H}$ has an element $x$ such that $\rel{E}^\structure{H}(x,x)=\rel{E}^\structure{G}(u,u)=\rel{E}^\structure{G}(v,v)$ and as $\structure{G}\not\to \structure{H}$ and similarly $\structure{G'}\not\to \structure{H}$ because they both  have a $\star$-arc and $\structure{H}$ is a  $01$-graph. Also, $\structure{G}'\to \structure{G}$ and $\structure{G}\not\to \structure{G'}$, so $\structure{G}$ can be removed from $\Obs_\star^{\subset}(\structure{H})$ if it is a duality set.
\end{proof}

\begin{proposition}\label{prop:hommin-star-01}
A family of $01$-structures $\family{F}$ is a duality set for $\MP_\star(\structure{H})$ if and only if $\family{F}$ is a duality set for $\MP(\structure{H})$.
\end{proposition}

\begin{proof}
Let $\family{F}$ be a duality set for $\MP_\star(\structure{H})$. Any $01$-structure $\structure{G}$ is also a $\star$-structure. So, if $\structure{G}\not\to\structure{H}$, then $\structure{F}\to\structure{G}$ for some $\structure{F}$ in $\family{F}$. This means that $\family{F}$ is a duality set for $\MP(\structure{H})$.

Let $\family{F}$ be a finite duality set for $\MP(\structure{H})$. Let $\structure{G}$ be a $\star$-structure that does not map to $\structure{H}$. By Lemma~\ref{lem:mpstardichotomy}, for each $\structure{G}$ in $\Cat_\star\setminus\Cat_{01}$, there exists $\structure{G}_{01}$ in $\Cat_{01}$
such that
  \begin{itemize}
    \item there is a surjective homomorphism $\pi_\structure{G}\colon\structure{G}_{01} \rightarrow \structure{G}$;
    \item $\structure{G}\not\to\structure{G}_{01}$;
    \item $\structure{G}$ is in $\MP_\star(\structure{H})$ if and only if $\structure{G}_{01}$ is in $\MP(\structure{H})$.
  \end{itemize}
As $\structure{G}_{01}$ is in $\Cat_{01}$ and does not map to $\structure{H}$, there exists $\structure{F}$ in $\family{F}$ such that
  $\structure{F}\to\structure{G}_{01}$. By transitivity, we have $\structure{F}\to\structure{G}$. This means that $\family{F}$ is also a duality set for $\MP_\star(\structure{H})$.
\end{proof}

\begin{corollary}\label{cor:hommin-star-01}
$\MP_\star(\structure{H})$ has finite duality if and only if $\MP(\structure{H})$ has finite duality.
\end{corollary}

\begin{remark}
Proposition~\ref{prop:homimpliesinclusion} states that $\Obs_{01}^\subset(\structure{H})$ is the (inclusion-wise) minimal
  duality set for $\MP(\structure{H})$. Proposition~\ref{prop:hommin-star-01} and Corollary~\ref{cor:hommin-star-01}
  imply that $\Obs_{01}^\subset(\structure{H})$ is also the (inclusion-wise) minimal duality set for $\MP_\star(\structure{H})$. So,
  without loss of generality, we can always take $\Obs_{01}^\subset(\structure{H})$ when we consider a duality set for $\MP_\star(\structure{H})$.
\end{remark}

\begin{proposition}\label{prop:obs_01_obs_star}
$\Obs_\star^\subset(\structure{H})$ is finite if and only if $\Obs_{01}^\subset(\structure{H})$ is finite.
\end{proposition}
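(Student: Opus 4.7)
The plan is to prove the two directions separately, with the Hadamard blow-up construction of \Cref{thm:mpstardichotomy} as the main technical tool for the nontrivial direction.

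For the direction ``$\Obs_\star^\subset(\structure{H})$ finite $\Rightarrow$ $\Obs_{01}^\subset(\structure{H})$ finite'', I would simply show the inclusion $\Obs_{01}^\subset(\structure{H}) \subseteq \Obs_\star^\subset(\structure{H})$. Indeed, by \Cref{prop:subposet} every $01$-structure is a $\star$-structure, and every induced substructure of a $01$-structure is again a $01$-structure. Thus, if $\structure{G}\in \Obs_{01}^\subset(\structure{H})$, then $\structure{G}\not\to\structure{H}$ and every proper induced substructure (viewed as a $\star$-structure) maps to $\structure{H}$, so $\structure{G}\in\Obs_\star^\subset(\structure{H})$.

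For the converse, assume $\Obs_{01}^\subset(\structure{H})$ is finite, and let $N$ be a bound on the size of its elements. I would prove that every $\structure{G}\in\Obs_\star^\subset(\structure{H})$ satisfies $|G|\leq N$; since the signature is fixed, this forces $\Obs_\star^\subset(\structure{H})$ to be finite up to isomorphism. The key device is the Hadamard blow-up $\structure{G}_{01}$ from the proof of \Cref{thm:mpstardichotomy}, together with its natural projection $\pi\colon G_{01}\twoheadrightarrow G$ satisfying $\pi(V_g)=\{g\}$ and the equivalence $\structure{G}\to\structure{H}\iff \structure{G}_{01}\to\structure{H}$. Given $\structure{G}\in\Obs_\star^\subset(\structure{H})$, we have $\structure{G}_{01}\not\to\structure{H}$, so $\structure{G}_{01}$ contains some $\structure{F}_{01}\in\Obs_{01}^\subset(\structure{H})$ as an induced substructure, with $|F_{01}|\leq N$. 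Setting $X:=\pi(F_{01})\subseteq G$, we obtain $|X|\leq N$. Using the locality of the Hadamard construction, $\structure{F}_{01}$ is an induced substructure of $(\structure{G}[X])_{01}$, so $(\structure{G}[X])_{01}\not\to\structure{H}$; by \Cref{thm:mpstardichotomy} applied to $\structure{G}[X]$, this gives $\structure{G}[X]\not\to\structure{H}$. Inclusion-minimality of $\structure{G}$ then forces $X=G$, so $|G|\leq N$.

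The main technical point to verify is the locality of the Hadamard construction: for any $X\subseteq G$, the $01$-structure $(\structure{G}[X])_{01}$ coincides with the induced sub-$01$-structure of $\structure{G}_{01}$ carried by $\bigsqcup_{g\in X}V_g$. This relies on two features of the construction in \Cref{thm:mpstardichotomy}: the block size $2^k$ is chosen as a function of $|H|$ alone (not of $|G|$), and the $01$-label of each tuple in the blow-up depends only on the label of the corresponding tuple in the original structure together with fixed entries of the Hadamard matrix indexed by the within-block coordinates. Hence no label depends on vertices outside $X$, and the restriction commutes with the blow-up. Once this locality is pinned down, the rest of the argument is routine.
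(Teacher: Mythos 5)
Your proof is correct, and the overall strategy mirrors the paper's: for the easy direction you use the inclusion $\Obs_{01}^\subset(\structure{H})\subseteq \Obs_\star^\subset(\structure{H})$, and for the hard direction you blow up an arbitrary $\structure{G}\in\Obs_\star^\subset(\structure{H})$ via \Cref{thm:mpstardichotomy}, locate a $01$-obstruction $\structure{F}_{01}$ inside $\structure{G}_{01}$, and use inclusion-minimality of $\structure{G}$ to show that the projected set $X=\pi(F_{01})$ must be all of $G$, giving $|G|\leq N$. The route you take through the locality of the Hadamard construction is correct (block size depends only on $|H|$, and tuple labels in the blow-up depend only on the projected tuple's label plus fixed within-block indices), but it is an avoidable detour: since $\pi\colon\structure{G}_{01}\to\structure{G}$ is a homomorphism, its restriction to $F_{01}$ already gives a homomorphism $\structure{F}_{01}\to\structure{G}[X]$, and composing this with $\structure{G}[X]\to\structure{H}$ (which holds by inclusion-minimality whenever $X\subsetneq G$) yields the contradiction $\structure{F}_{01}\to\structure{H}$ directly, without ever needing to identify $(\structure{G}[X])_{01}$ with the restriction of $\structure{G}_{01}$. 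The paper frames the same conclusion slightly differently, by defining the explicitly finite class $\overline{\Obs_{01}^\subset}(\structure{H})$ of surjective homomorphic images of $01$-obstructions and showing $\Obs_\star^\subset(\structure{H})$ is contained in it; your size-bound phrasing and the paper's containment phrasing are equivalent and both valid. In short: correct and fundamentally the same argument, with a mildly more technical intermediate lemma than necessary.
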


\begin{proof} Since any $01$-structure is also a $\star$-structure, we can conclude that
  $\Obs_{01}^\subset(\structure{H})\subseteq \Obs_\star^\subset(\structure{H})$, proving the  left-to-right implication.
  
Let us now turn our attention to the other implication and suppose that $\Obs_{01}^\subset(\structure{H})$ is finite.  Let us consider the class $\overline{\Obs_{01}^\subset}(\structure{H})$, that is obtained from  $\Obs_{01}^\subset(\structure{H})$ by taking all $\star$-structures $\structure{A}$ such that there exists a surjective homomorphism  from $\structure{B}$ to $\structure{A}$, for some $\structure{B}$ in $\Obs_{01}^\subset(\structure{H})$. Observe that  $|\overline{\Obs_{01}^\subset}(\structure{H})|$ is finite too. We know by Lemma~\ref{lem:mpstardichotomy} that, for every  $\structure{G}$ in $\Obs_\star^\subset(\structure{H})$, there exists a $01$-structure $\structure{G}_{01}$ such that:
\begin{itemize}
    \item there is a surjective homomorphism $\pi_\structure{G}\colon\structure{G}_{01} \rightarrow \structure{G}$;
    \item $\structure{G}\not\to\structure{G}_{01}$;
    \item $\structure{G}$ is in $\MP_\star(\structure{H})$ if and only if $\structure{G}_{01}$ is in $\MP(\structure{H})$.
\end{itemize}
As $\structure{G}\notin \MP_\star(\structure{H})$, we can conclude that $\structure{G}_{01}\notin \MP(\structure{H})$. Because $\structure{G}_{01}$ is
a $01$-structure, there exists $\structure{G}_{01}'$ in $\Obs_{01}^\subset(\structure{H})$ such that $\structure{G}_{01}'$ is an induced substructure
of $\structure{G}_{01}$, and thus, by transitivity, $\structure{G}_{01}'\to\structure{G}$. By minimality of $\structure{G}$ (recall that
$\structure{G}$ is in $\Obs_\star^\subset(\structure{H})$), this homomorphism is surjective, i.e., $\structure{G}$ belongs to $\overline{\Obs_{01}^\subset}(\structure{H})$. We
have thus proved that $\Obs_\star^\subset(\structure{H})\subseteq \overline{\Obs_{01}^\subset}(\structure{H})$, i.e., that it is finite.
\end{proof}

By transitivity, we obtain the following.
\begin{corollary}\label{cor:obs_star_fd}
$\Obs_\star^\subset(\structure{H})$ is finite if and only if $\MP_\star(\structure{H})$ has finite duality.
\end{corollary}

\subsection{Looking at $\Cat_\varnothing$}

The goal now is to prove the remaining arrows on the diagram from \cpageref{fig:obs_diagram}. 

\begin{proposition}\label{prop:mp_empty_mp_star}
If $\MP_\varnothing(\structure{H})$ has finite duality, then $\MP_\star(\structure{H})$ has finite duality.
\end{proposition}

\begin{proof}
Let $\family{F}_\varnothing$ be a finite duality set for $\MP_\varnothing(\structure{H})$. Let $\family{F}_\star$ be a duality set for
  $\MP_\star(\structure{H})$. By Proposition~\ref{prop:homimpliesinclusion} and Corollary~\ref{cor:hommin-star-01}, we can assume without loss of generality that $\family{F}_\star=\Obs_{01}^\subset(\structure{H})$. For every $\structure{G}$ in  $\Obs_{01}^\subset(\structure{H})$, we have  $\structure{G}\not\to\structure{H}$, so we have $\structure{G}_\varnothing\to\structure{G}$ for some $\structure{G}_\varnothing$ in $\family{F}_\varnothing$. This homomorphism must be surjective because every proper induced substructure of $\structure{G}$ can be mapped to $\structure{H}$. As $\family{F}_\varnothing$ is finite, there is some constant $c$ such that, for every $\structure{G}_\varnothing$ in $\family{F}_\varnothing$, $|G_\varnothing|<c$. Then, $|G|<c$ for every $\structure{G}$ in $\Obs_{01}^\subset(\structure{H})$ so it is finite.
\end{proof}

We now prove a similar statement for minimal obstructions. 

\begin{proposition}\label{prop:obs_empty_obs_star}
If $\Obs_\varnothing^\subset(\structure{H})$ is finite, then $\Obs_\star^\subset(\structure{H})$ is finite.
\end{proposition}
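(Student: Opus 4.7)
The plan is to prove the stronger statement that $\Obs_\star^\subset(\structure{H})\subseteq \Obs_\varnothing^\subset(\structure{H})$, from which the desired implication follows immediately. In other words, every inclusion-minimal obstruction in $\Cat_\star$ turns out to already be inclusion-minimal in the larger category $\Cat_\varnothing$, so finiteness of the latter set transfers down.

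To do this, I would fix $\structure{G}\in\Obs_\star^\subset(\structure{H})$ and view it inside $\Cat_\varnothing$ using \Cref{prop:subposet}. The non-homomorphism condition $\structure{G}\not\to\structure{H}$ is the same statement in both categories, so that part is free. The key observation concerns induced substructures: since $\structure{G}$ is a $\star$-structure, no tuple of $\structure{G}$ carries the label $\varnothing$, and by the very definition of induced substructure (which only restricts the domain of each relation and copies the labels verbatim), any induced substructure $\structure{G}\setminus\{v\}$ is again a $\star$-structure rather than a genuine $\varnothing$-structure. Consequently, the families of induced substructures of $\structure{G}$ computed in $\Cat_\star$ and in $\Cat_\varnothing$ coincide set-theoretically.

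From this coincidence, the inclusion-minimality in $\Cat_\star$ directly delivers inclusion-minimality in $\Cat_\varnothing$: for every $v\in G$, the substructure $\structure{G}\setminus\{v\}$ admits a homomorphism to $\structure{H}$ by hypothesis, and this homomorphism is valid regardless of whether we regard its source as a $\star$-structure or as a $\varnothing$-structure. Hence $\structure{G}\in\Obs_\varnothing^\subset(\structure{H})$, establishing the claimed inclusion. Finiteness of $\Obs_\varnothing^\subset(\structure{H})$ then forces $\Obs_\star^\subset(\structure{H})$ to be finite as well.

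I do not anticipate a real obstacle here: the argument amounts to unpacking definitions once one notices that adding the label $\varnothing$ to the poset $P_\star$ cannot introduce new $\varnothing$-labelled tuples in an induced substructure of a $\star$-structure. The only subtlety to flag in the write-up is to make sure the reader sees that the induced-substructure operation is literally the same in both $\Cat_\star$ and $\Cat_\varnothing$ when the ambient structure contains no $\varnothing$-tuples, which is precisely why the minimality condition transfers without loss.
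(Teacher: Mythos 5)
Your argument is correct and is essentially the same as the paper's, which simply notes that any $\star$-structure is also a $\varnothing$-structure and concludes $\Obs_\star^\subset(\structure{H})\subseteq\Obs_\varnothing^\subset(\structure{H})$. Your write-up just makes explicit the two facts the paper leaves implicit — that induced substructures of a $\star$-structure are again $\star$-structures and that the homomorphism relation restricted to $\Cat_\star$ coincides in both posets.
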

\begin{proof} Every $\star$-structure is also a $\varnothing$-structure. Thus, $\Obs_\star^\subset(\structure{H})\subseteq \Obs_\varnothing^\subset(\structure{H})$.
\end{proof}

We are now going to prove that $\Obs_\varnothing^\subset(\structure{H})$ is finite if and only if $\MP_\varnothing(\structure{H})$  has finite duality.

\begin{proposition}\label{prop:empty_fd_obs}
If $\MP_\varnothing(\structure{H})$ has finite duality, then $\Obs_\varnothing^\subset(\structure{H})$ is finite.
\end{proposition}
\begin{proof}
Let $\family{F}_\varnothing$ be a finite duality set for $\MP_\varnothing(\structure{H})$ and let $c$ be the maximal size of a structure in $\family{F}_\varnothing$. Consider $\structure{G}$ in $\Obs_\varnothing^\subset(\structure{H})$.  Then, there exists $\structure{T}$ in $\family{F}_\varnothing$ such that $\structure{T}\to\structure{G}$. Moreover, we know that $\structure{T}$ always maps surjectively to $\structure{G}$, because otherwise the substructure of $\structure{G}$ induced by the image of
  $\structure{T}$ would not map to $\structure{H}$, contradicting that $\structure{G}$ is a minimal obstruction. Therefore, $|G|\leq c$, implying that $\Obs_\varnothing^\subset(\structure{H})$ is finite.
\end{proof}

We state the
following, which finishes the proof of the diagram from \cpageref{fig:obs_diagram}.

\begin{proposition}\label{prop:empty_obs_fd}
If $\Obs_\varnothing^\subset(\structure{H})$ is finite, then $\MP_\varnothing(\structure{H})$ has finite duality.
\end{proposition}
\begin{proof}
It is sufficient to show that $\Obs_\varnothing^\subset(\structure{H})$ is a duality set. Suppose that $\structure{G}\not\to\structure{H}$, for some $\structure{G}$ in $\Cat_\varnothing$. If, for all $x\in G$, the substructure induced by $G\setminus \{x\}$ maps to $\structure{H}$, then $\structure{G}$ is already a minimal obstruction. Otherwise, there exists a proper induced substructure that does not map to $\structure{H}$. Then, we can choose one such substructure that is minimal by inclusion. It belongs to $\Obs_\varnothing^\subset(\structure{H})$ and can be mapped to $\structure{G}$ as it is an induced substructure. This means that $\Obs_\varnothing^\subset(\structure{H})$ is a duality set.
\end{proof}

In the following proposition, we show that there exists $\structure{H}$ such that $\Obs_{01}^\subset(\structure{H})$ is finite and
$\Obs_\varnothing^\subset(\structure{H})$ is infinite. Hence, the finiteness of $\Obs_\star^\subset(\structure{H})$ does not
  imply the finiteness of $\Obs_\varnothing^\subset(\structure{H})$ in general. However, one may ask, for which $\star$-graphs $\structure{H}$, the finiteness of $\Obs^\subset_\star(\structure{H})$ implies the finiteness of $\Obs^\subset_\varnothing(\structure{H})$?

\begin{proposition}\label{prop:emptyset_obstruction_infinite}
Let $\structure{H} = \structure{K}_2$ be a $01$-graph, the clique on $2$ vertices. Then, $\Obs_{01}^\subset(\structure{H})$ is finite and $\Obs_\varnothing^\subset(\structure{H})$ is infinite.
\end{proposition}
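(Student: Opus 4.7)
The plan is to enumerate $\Obs_{01}^\subset(\structure{K}_2)$ explicitly, showing it has exactly four elements, and to exhibit an explicit infinite family inside $\Obs_\varnothing^\subset(\structure{K}_2)$. First I would characterise when a $01$-graph $\structure{G}$ admits a homomorphism to $\structure{K}_2$: because $\preceq_{01}$ is the empty order, a homomorphism $h\colon\structure{G}\to\structure{K}_2$ must satisfy $\relation{E}^\structure{G}(x,y)=\relation{E}^{\structure{K}_2}(h(x),h(y))$ for every ordered pair $(x,y)\in G^2$. Consequently, $\structure{G}$ maps to $\structure{K}_2$ if and only if (i) no vertex $v$ satisfies $\relation{E}^\structure{G}(v,v)=1$, (ii) $\relation{E}^\structure{G}$ is symmetric on distinct vertices, and (iii) the undirected $1$-edge graph on distinct vertices is complete bipartite (allowing empty parts).

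Given this characterisation, I would enumerate the inclusion-minimal obstructions. Condition (i) contributes the single-vertex $1$-loop. Condition (ii) contributes the $2$-vertex structure with $\relation{E}(x,y)=1$, $\relation{E}(y,x)=0$ and both loops equal to $0$. Condition (iii), using the classical fact that complete bipartite graphs are exactly those forbidding both $K_3$ and $K_1\cup K_2$ as induced subgraphs, contributes two $3$-vertex obstructions with all loops $0$: the symmetric $1$-triangle, and the structure on $\{a,b,c\}$ with $\relation{E}(a,c)=\relation{E}(c,a)=1$ and all remaining off-diagonal entries equal to $0$. Each is readily checked to be inclusion-minimal by deleting a single vertex, and any larger non-mappable $01$-graph must contain one of these four obstructions as an induced substructure since each failure of (i)--(iii) is witnessed on at most three vertices. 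This shows $\Obs_{01}^\subset(\structure{K}_2)$ is finite.

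For the infinite side, I would invoke the bijection $(\cdot)_\CSP$ from \Cref{sec:fromemptysettocsp}: membership in $\MP_\varnothing(\structure{K}_2)$ amounts to assigning colours in $\{0,1\}$ to the vertices so that $0$-edges force equal colours and $1$-edges force distinct colours. For every odd integer $n=2k+1\geq 3$, let $\structure{C}_n$ be the $\varnothing$-graph on $\{v_1,\ldots,v_n\}$ in which $\relation{E}(v_i,v_{(i\bmod n)+1})=\relation{E}(v_{(i\bmod n)+1},v_i)=1$ for every $i\in[n]$, and every other entry, including loops, equals $\varnothing$. Then $\structure{C}_n\not\to\structure{K}_2$, since an odd cycle of inequality constraints is unsatisfiable, while deleting any vertex $v_j$ yields a path of inequality constraints, which is $2$-colourable. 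Hence every such $\structure{C}_n$ belongs to $\Obs_\varnothing^\subset(\structure{K}_2)$, and the $\structure{C}_n$ are pairwise non-isomorphic by cardinality, producing the desired infinite family.

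The main obstacle is ensuring completeness of the case analysis in the $01$-part: one must verify that no combination of the failures (i)--(iii) on four or more vertices is itself inclusion-minimal. This follows from the locality of each failure mode, which is witnessed on at most three vertices, so any larger bad structure strictly contains one of the four listed obstructions as an induced substructure and is therefore not minimal.
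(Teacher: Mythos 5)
Your proof is correct, and on the $01$-side it takes a genuinely different route from the paper. For the finiteness of $\Obs_{01}^\subset(\structure{K}_2)$, the paper simply invokes the Feder--Hell theorem that any $01$-graph target has inclusion-minimal obstructions of bounded size; you instead derive the result from scratch by observing that $\structure{G}\to\structure{K}_2$ in $\Cat_{01}$ iff all loops are $0$, $\relation{E}^\structure{G}$ is symmetric, and the $1$-edge graph on distinct vertices is complete bipartite, then applying the classical forbidden-induced-subgraph characterisation (no $K_3$, no $K_1\cup K_2$) to extract the explicit four-element obstruction set. This is more self-contained and more informative for this particular target, at the cost of not covering general $01$-targets uniformly as the citation does. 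Your argument for completeness of the enumeration is sound: once failures of (i) and (ii) are excluded, (iii)'s failure is witnessed on exactly three vertices, so any larger non-mapping structure properly contains one of the four and cannot be minimal. On the $\varnothing$-side, your construction is essentially the paper's odd cycle of $1$-constraints with $\varnothing$ everywhere else; you symmetrize the cycle edges while the paper orients them, but since the unlabeled $\varnothing$-reverse edges impose no constraint, both encode the same system of inequalities and the deletion argument is identical.
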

\begin{proof}
Feder and Hell proved in~\cite{feder_hell_obstructions_full_2008} that once $\structure{H}$ is a $01$-graph, the minimal obstructions for $\MP(\structure{H})$ have bounded size. Thus, $\Obs_{01}^\subset(\structure{H})$ is finite.

Let us show that $\Obs_\varnothing^\subset(\structure{H})$ is infinite. Consider a $\varnothing$-graph $\structure{C}_{n}$ on the domain $v_1,\ldots,v_n$ with $\rel{E}^{\structure{C}_n}(v_i,v_{i+1})=1$ for all $i$ in $[n-1]$ and with $\rel{E}^{\structure{C}_n}(v_n,v_1)=1$, and with all other arcs equal to $\varnothing$. The problem $\structure{C}_n\to\structure{H}$ is equivalent to the $2$-colouring of a directed cycle that is a directed graph, for which we know that odd cycles are all minimal obstructions. Similarly, deleting any vertex from $\structure{C}_n$ creates a $\varnothing$-graph that maps to $\structure{H}$. Thus, the set $\family{C} = \{\structure{C}_n\mid n\text{ is odd}\}$ is an infinite set of minimal obstructions for $\MP_\varnothing(\structure{H})$.
\end{proof}

\section{Remarks on tractability}\label{sec:tractability}

Despite some cases (see for instance~\cite{hell2014,federhellsplit2014,hell_transitive_2017}), the tractability of Matrix Partition Problems on some graph classes is not that studied. Having proved some similarities with usual $\CSP$s, we can ask, for instance, whether well-known graph classes with tractable $\CSP$s still have tractable Matrix Partition Problems. We show that this is unlikely and deserves to be investigated. Let us explain. 

\emph{Tree-width}~\cite{atserias2008,grohe2007} is a well-known graph parameter due to its numerous algorithmic applications, in particular, 
any $\CSP(\structure{H})$ is polynomial time solvable on the class of graphs of bounded tree-width. More importantly, checking whether, for two graphs $\structure{G}$ and $\structure{H}$, there is a homomorphism from $\structure{G}$ to $\structure{H}$, can be solved in time $(|G|+|H|)^{poly(k)}$, where $k$ is the tree-width of
$\structure{G}$.  A natural question is whether such an algorithm exists for Matrix Partitions. One can define the \emph{$\star$-tree-width} of a $\star$-graph as follows.

\begin{definition}
For a $\star$-graph $\structure{G}$, let $\structure{G}_0,\structure{G}_1$ be two directed graphs with the same domain $G$ such that, for $\ast\in\{0,1\}$, $\rel{E}^{\structure{G}_\ast} = \{(x,y)\in G^2\mid \ast \preceq_\star \rel{E}^\structure{G}(x,y)\}$. Then, the \emph{$\star$-tree-width} of $\structure{G}$ is the minimum of the tree-width of $\structure{G}_0$ and the tree-width of $\structure{G}_1$.
\end{definition}

This definition seems natural, because we can describe in $\fo$ the omitted arcs using those that are present. One easily checks by
  Courcelle's theorem (see for instance~\cite{courcelle2012}) that, for every fixed $\star$-graph $\structure{H}$ and every fixed positive integer $k$, we can
  decide in time $f(k)\cdot |G|$, for some function $f:\mathbb{N}\to \mathbb{N}$, whether an input $\star$-graph $\structure{G}$ of $\star$-tree-width $k$
  belongs to $\MP_\star(\structure{H})$. We prove however that, unless $\polynomial=\NP$, there is no algorithm running in time
$(|G|+|H|)^{poly(k)}$, where $k$ is the $\star$-tree-width of $\structure{G}$, and on input $(\structure{G},\structure{H})$ checks whether
$\structure{G}\in \MP_\star(\structure{H})$, even for the case $k=1$.

For a family of $\star$-graphs $\family{G}$, we denote by $\MP_\star(\family{G}, -)$  the set of all pairs of $\star$-graphs $\structure{G},\structure{H}$ such that $\structure{G}\in\family{G}$ and $\structure{G}\to\structure{H}$.

A graph is called a \emph{tree} if its tree-width is equal to 1. Denote by $\family{T}$ the class of all $01$-graphs having $\star$-tree-width equal to 1, called \emph{$01$-trees}. We will prove the following theorem in this section.

\begin{theorem}\label{thm:3sat}
The problem $\MP(\family{T}, -)$ is $\NPc$.
\end{theorem}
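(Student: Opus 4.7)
I would prove $\NP$-hardness by reduction from \textsc{3-SAT}; membership in $\NP$ is immediate, since one can guess a homomorphism and verify it in polynomial time. Given a $3$-SAT instance $\phi$ with variables $x_1,\ldots,x_n$ and clauses $C_1,\ldots,C_m$, where $C_j=\ell_{j,1}\vee\ell_{j,2}\vee\ell_{j,3}$, the plan is to construct a $01$-tree $\structure{G}$ and a $\star$-graph $\structure{H}$, both of polynomial size in $|\phi|$, with $\structure{G}\in\MP(\structure{H})$ if and only if $\phi$ is satisfiable.

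For $\structure{G}$ I would take a caterpillar whose $1$-skeleton consists of a backbone $q_0,q_1,\ldots,q_m$ joined by consecutive $1$-edges, at each $q_j$ (for $j\in[m]$) a pendant $1$-path $q_j,z_j,y_j,u_j$ of length $3$, and at each $q_j$ a small distinguishing gadget uniquely encoding the index $j$. Every other pair of vertices of $\structure{G}$, including all loops, is a $0$-edge, so the $1$-edges form a tree and $\structure{G}\in\family{T}$. The target $\structure{H}$ mirrors this skeleton: a backbone $R_0,R_1,\ldots,R_m$ linked by $1$-edges, at each $R_j$ a matching copy of the index-$j$ identifier (occurring nowhere else in $\structure{H}$) and a $1$-edge to a fresh clause vertex $Z_j$; three $1$-edges from $Z_j$ to literal-choice vertices $Y_{j,1},Y_{j,2},Y_{j,3}$, and a $1$-edge from each $Y_{j,k}$ to the satisfier of $\ell_{j,k}$, namely the vertex $T_i$ if $\ell_{j,k}=x_i$ and $F_i$ if $\ell_{j,k}=\neg x_i$, where $T_i,F_i$ are shared across all occurrences of $x_i$; finally, a $1$-edge between $T_i$ and $F_i$ for every $i$. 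All remaining pairs of vertices of $\structure{H}$ are $0$-edges, and the total size is polynomial in $|\phi|$.

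To see the equivalence, if $\phi$ has a satisfying assignment $\alpha$ I would map $q_j\mapsto R_j$, $z_j\mapsto Z_j$, pick for each clause a literal $\ell_{j,k_j}$ true under $\alpha$, and set $y_j\mapsto Y_{j,k_j}$ together with $u_j\mapsto $ satisfier of $\ell_{j,k_j}$, while the identifier gadgets are routed to their copies. Since only one of $T_i,F_i$ is ever in the image (by consistency of $\alpha$), the only $1$-edge of $\structure{H}$ not traversed by the chosen mapping, namely the edge between $T_i$ and $F_i$, is avoided, and all other $0$-edges of $\structure{G}$ land on $0$-edges of $\structure{H}$. Conversely, given any homomorphism, the identifier gadgets force $q_j\mapsto R_j$ and hence $z_j\mapsto Z_j$, $y_j\mapsto Y_{j,k}$ for some $k$, and $u_j$ to the satisfier of $\ell_{j,k}$; the $1$-edge between $T_i$ and $F_i$, combined with the pairwise $0$-edges between the literal vertices $u_j$ of different clauses, forbids both $T_i$ and $F_i$ from simultaneously lying in the image, so the image defines a consistent truth assignment, which satisfies every clause because $u_j$ always lands on the satisfier of one of $C_j$'s literals.

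The main technical hurdle is the rigidity step: proving that the identifier gadgets really do force $q_j\mapsto R_j$, and that no shortcut homomorphism can send a backbone vertex to a clause- or satisfier-level vertex of $\structure{H}$. This reduces to a careful combinatorial check that the intended skeleton is the only subgraph of $\structure{H}$ whose $1$-edges realise the $1$-shape of $\structure{G}$, together with an analysis that the mostly-$0$ choice of edges elsewhere in $\structure{H}$ offers no loophole; this is where essentially all of the bookkeeping of the proof lies, and it can be carried out using standard indicator-gadget constructions tailored so that each $R_j$ is distinguished from every other vertex of $\structure{H}$.
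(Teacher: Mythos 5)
Your reduction follows the same broad blueprint as the paper's: a $3$-SAT instance is encoded into a $01$-tree $\structure{G}$ whose backbone enumerates the clauses and whose pendant paths encode the literals, paired with a polynomial-size target $\star$-graph $\structure{H}$ whose $1$-edges enforce consistency via the mandatory $0$-to-$\{0,\star\}$ constraint. Your consistency gadget is genuinely different and arguably cleaner than the paper's: you introduce global vertices $T_i, F_i$ per variable with a single $1$-edge between them, whereas the paper makes seven separate branches per clause (one per satisfying assignment) and adds a $1$-edge between the endpoints of every pair of oppositely-labelled copies of the variable path $\structure{P}_i$. Both ideas exploit the fact that a $0$-edge of the input cannot land on a $1$-edge of the target.

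However, there is a real gap, and it sits exactly where you say it does: the rigidity step. You reduce rigidity to an unspecified ``identifier gadget'' and a ``careful combinatorial check'' without saying what the gadget is or why a homomorphism cannot collapse or permute the backbone and pendants. In the paper this step is not bookkeeping but the technical heart of the argument, and it relies on two ingredients you omit. First, the paper works with \emph{oriented} paths $\structure{P}_i$ and $\structure{N}_{ijk}$ (from Bulin et al.), for which $\structure{P}_i \to \structure{P}_j$ iff $i=j$; your caterpillar is described without any edge orientations, so the $1$-skeleton of $\structure{G}$ has symmetries (e.g.\ reflection of a pendant path, or sliding $u_j$ back toward $Z_j$ through bidirectional edges) that your gadget must somehow kill, and you do not explain how. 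Second, the paper proves that the root maps to the root via a \emph{height} (balanced-path) argument that crucially uses the fact that the difference between fore-coming and back-coming edges is a homomorphism invariant and that the extra consistency $1$-edges perturb heights by at most one; nothing analogous is established for your $\structure{H}$, which in fact has a more entangled cycle structure because $T_i$ and $F_i$ are shared across all clauses, so even a standard indicator-gadget argument would require a fresh analysis, not a ``standard'' one. Until the identifier gadgets are made explicit and the height/orientation machinery (or a replacement for it) is supplied, the backward direction of your reduction is not proved.
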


The plan is to reduce the 3-SAT problem to $\MP(\family{T},-)$. It will be convenient to represent 3-SAT as a constraint satisfaction problem. For $i,j,k\in\{0,1\}$, put $\rel{R}_{ijk}:=\{0,1\}^3\setminus\{(i,j,k)\}$. Then, the 3-SAT problem can be presented as $\CSP(\{0,1\};\rel{R}_{000},\ldots,\rel{R}_{111})$. The input is given by a primitive-positive sentence $\varphi$ with $N$ variables and $m$ clauses:
\begin{equation*}
\varphi = \exists x_1,\ldots,x_N\bigwedge_{j=1}^m \rel{R}_{s_{j_1}s_{j_2}s_{j_3}}(x_{j_1},x_{j_2},x_{j_3})
\end{equation*}

\subsection{Construction of the 01-tree $\structure{T}$}

Without loss of generality, we can assume that, for all $j\in[m]$, $j_1\leq j_2\leq j_3$. Call $s_{j_1}s_{j_2}s_{j_3}$ the \emph{negation type} of the $j$-th clause. Observe that there are eight possible negation types, assign to them numbers from 1 to 8 as follows: $s_{j_1}s_{j_2}s_{j_3}\mapsto 4{s_{j_1}}+2{s_{j_2}}+s_{j_3}+1$. For example, the negation type of a clause $\rel{R}_{101}(x_1,x_2,x_3)$ is $4\cdot 1+2\cdot 0+1+1=6$.

For $\ell\in[8]$, we introduce a 01-tree $\structure{N}_\ell$. Its domain has 13 elements $\{n_1,\ldots,n_{13}\}$ and the relation $\rel{E}^{\structure{N}_\ell}$ is defined as follows, see \Cref{fig:negs} for some examples.
\begin{align*}
\rel{E}^{\structure{N}_\ell}(n_u,n_v) &= \begin{cases}1 &\text{if [$u+1=v$ and $u\not= \ell+2$] or [$v+1=u$ and $v=\ell+2$],}\\ 0 &\textrm{otherwise.}\end{cases}
\end{align*}

\begin{figure}[ht]
    \centering
    \includegraphics[width=0.35\textwidth]{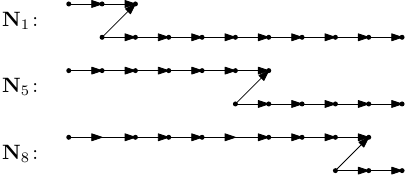}
    \caption{01-trees that represent the negation types.}
    \label{fig:negs}
\end{figure}

For every variable $x_i\in\{x_1,\ldots,x_N\}$, we introduce a 01-tree $\structure{P}_i$. Its domain has $N+5$ elements $\{p_1,\ldots,p_{N+5}\}$ and the relation $\rel{E}^{\structure{P}_i}$ is defined as follows, see \Cref{fig:paths} for some examples.
\begin{align*}
\rel{E}^{\structure{P}_i}(p_u, p_v) &= \begin{cases}1 &\text{if [$u+1=v$ and $u\not= i+2$] or [$v+1=u$ and $v=i+2$],}\\ 0 &\textrm{otherwise.}\end{cases}
\end{align*}

\begin{figure}[ht]
    \centering
    \includegraphics[width = 0.25\textwidth]{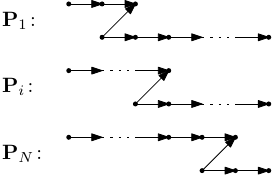}
    \caption{Correspondence between paths and variables.}
    \label{fig:paths}
\end{figure}

Now, we construct the 01-tree $\structure{T}$ rooted at a node $r_T$. The degree of $r_T$ will be equal to $m$ -- the number of clauses in $\varphi$. For each $j\in[m]$, do the following.
\begin{enumerate}
    \item Assuming that the negation type of the $j$-th clause is $\ell\in[8]$, we add to $\structure{T}$ a copy of $\structure{N}_\ell$, denoted by $\structure{N}_j$ and identify its leftmost vertex $n_1^j$ with the root $r_T$.
    \item For the variables $x_{j_1}, x_{j_2}, x_{j_3}$ of the $j$-th clause, we introduce copies of $\structure{P}_{j_1}, \structure{P}_{j_2}, \structure{P}_{j_3}$ and identify their leftmost vertices: $p_1^{j_1}=p_1^{j_2}=p_1^{j_3}$, denoting this vertex by $p_1^j$.
    \item Add a 1-arc between the rightmost vertex $n_{13}^j$ of $\structure{N}_j$ and $p_1^j$: $\rel{E}^\structure{T}(n_{13}^j,p_1^j)=1$. All other vertices between different gadgets are connected by 0-arcs.
\end{enumerate}

Such a  $01$-tree $\structure{T}$ is displayed on \Cref{fig:input}.

\begin{figure}[ht]
    \centering
    \includegraphics[width = 0.35\textwidth]{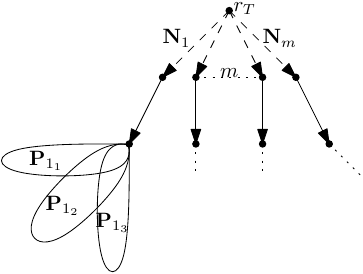}
    \caption{The construction of the  $01$-tree $\structure{T}$.}
    \label{fig:input}
\end{figure}

\subsection{Construction of the 01-graph $\structure{H}$}

$\structure{H}$ is constructed in a similar fashion as $\structure{T}$, we start by introducing a vertex $r_H$ and, for every $j\in[m]$, do the following.

\begin{enumerate}
    \item Assuming that the negation type of the $j$-th clause is $\ell\in[8]$, $\ell=s_{j_1}s_{j_2}s_{j_3}$, we add to $\structure{H}$ a copy of $\structure{N}_\ell$, denoted by $\structure{N}_j$ and identify its leftmost vertex $n_1^j$ with the vertex $r_H$.
    \item For the variables $x_{j_1}, x_{j_2}, x_{j_3}$ of the $j$-th clause, we introduce 7 copies of $\structure{P}_{j_1}, \structure{P}_{j_2}, \structure{P}_{j_3}$, denoted by $\structure{P}_{j_1}^k, \structure{P}_{j_2}^k, \structure{P}_{j_3}^k$, for $k\in[8]\setminus \ell$. For $v\in[3]$, say that the path $\structure{P}_{j_v}^k$ has \emph{label} $\lfloor 2^{v-3}(k-1)\rfloor\mod 2$. For example, if $k=6$, then $\structure{P}_{j_1}^k$ and $\structure{P}_{j_3}^k$ have labels 1, and $\structure{P}_{j_2}^k$ has label 0. This is because 6 is associated with the triple $101$.
    \item For each $k$, we identify the leftmost vertices of $\structure{P}_{j_1}^k, \structure{P}_{j_2}^k, \structure{P}_{j_3}^k$: $p_1^{k,j_1}=p_1^{k,j_2}=p_1^{k,j_3}$, denote this vertex by $p_1^{k,j}$.
    \item For each $k\in[8]\setminus \ell$, add a 1-arc between the rightmost vertex $n_{13}^j$ of $\structure{N}_j$ and $p_1^{k,j}$: $\rel{E}^\structure{H}(n_{13}^j,p_1^{k,j})=1$.
    \item Finally, for every $j,j'\in[m]$, do the following. Assume that the negation type of the $j'$-th clause is $\ell'=s_{j'_1}s_{j'_2}s_{j'_3}$ (remind the negation type of the $j$-th clause is $\ell$). Let $p_{N+5}^{k,j_v},p_{N+5}^{k',j'_{v'}}$ be the rightmost vertices of the paths $\structure{P}_{j_v}^k,\structure{P}_{j'_{v'}}^{k'}$, for all $v,v'\in[3],k\in[8]\setminus \ell$ and $k'\in[8]\setminus \ell'$. For every two such vertices, if $x_{j_v}=x_{j'_{v'}}$ and if the paths $\structure{P}_{j_v}^k$ and $\structure{P}_{j'_{v'}}^{k'}$ have different labels, then add a 1-arc between $p_{N+5}^{k,j_v}$ and $p_{N+5}^{k',j'_{v'}}$: $\rel{E}^\structure{H}(p_{N+5}^{k,j_v},p_{N+5}^{k',j'_{v'}})=1$.
\end{enumerate}
\begin{figure}[ht]
    \centering
    \includegraphics[width = 0.5\textwidth]{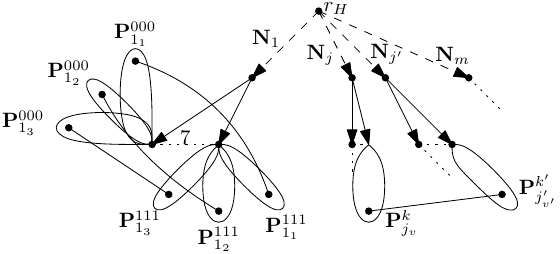}
    \caption{The construction of $\structure{H}$.}
    \label{fig:target}
\end{figure}

As the last step, arcs are added between the right ends of $\structure{P}_{j_v}^k$ and of $\structure{P}_{j'_{v'}}^{k'}$ that are associated with the same variable but have different labels. Their purpose is to forbid the same variable $x_{j_v}=x_{j'_{v'}}$ to take different values in the $j$-th and the $j'$-th clauses. The resulting 01-graph $\structure{H}$ is displayed on \Cref{fig:target}.

\begin{lemma}\label{lem:psize}
The sizes of both $\structure{T}$ and $\structure{H}$ are $O(mN)$, where $m$ is the number of clauses and $N$ is the number of variables in $\varphi$.
\end{lemma}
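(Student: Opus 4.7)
The plan is a direct vertex count for each of the two constructions. First I would recall the sizes of the building blocks: each path $\structure{P}_i$ has length $n+4$, hence $n+5$ vertices, and each path $\structure{N}_{ijk}$ has a fixed length of $12$, hence $13$ vertices. By \Cref{lem:pathsincomp} these are all that is needed to describe the gadgets.

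For $\structure{T}$, I would count the contribution of each clause to the vertex set. Starting from the shared root $r_T$, each clause contributes one copy of $\structure{N}_{i1i2i3}$ (sharing its left endpoint with $r_T$), one connecting edge (introducing one new vertex), and three copies of $\structure{P}_{ij}$ (each sharing its left endpoint with the head of the connecting edge). This adds $12 + 1 + 3(n+4) = 3n + 25$ new vertices per clause. Summing gives $|T| = 1 + m(3n+25) = O(mn)$.

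For $\structure{H}$, the count is analogous with a slightly larger constant: from $r_H$, each clause contributes one copy of $\structure{N}_{i1i2i3}$, then $7$ edges branching out (one per valid assignment), and for each branch three paths $\structure{P}_{ij}$ of length $n+4$. This gives $12 + 7 + 21(n+4) = 21n + 103$ new vertices per clause, so $|H| = 1 + m(21n+103) = O(mn)$. The additional $\star$-edges added between endpoints of oppositely-labelled copies of the same $\structure{P}_i$ do not introduce any new vertices, so the size bound is unaffected.

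The only subtlety, which is the hardest point to address, is the meaning of \emph{size}: in a $\star$-graph the adjacency relation is total, so the number of labelled pairs is quadratic in the number of vertices. I would therefore interpret size here as the cardinality of the vertex set (equivalently, of the domain), as is standard for relational structures; under this interpretation the bound $O(mn)$ is tight and follows from the counts above.
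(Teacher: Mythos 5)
Your proof is correct and follows essentially the same approach as the paper: count the $O(n)$ vertices of each $\structure{P}_i$, the $O(1)$ vertices of each $\structure{N}_{ijk}$, and multiply by the number of clauses (with an extra constant factor of $7$ for $\structure{H}$). Your explicit constants and the clarification that ``size'' refers to the cardinality of the domain (not the total number of labelled pairs, which would be quadratic) are welcome additions but do not change the argument.
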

\begin{proof}
For $i\in[N]$, the size of each $\structure{P}_i$ is $O(N)$. For $\ell\in[8]$, the size of each $\structure{N}_\ell$ is $O(1)$. So the size of $\structure{T}$ is $O(mN)$. $\structure{H}$ is 7 times larger than 
 $\structure{T}$.
\end{proof}

Before proving the next lemma, we need to define the \emph{relative height} function for $\structure{T}$. It is defined inductively. First, $height_T(x,x):=0$, for all $x\in T$. Suppose that $height_T(x,y)$ is known and equal to $n$. If, for some $z\in T$, $\rel{E}^\structure{T}(y,z)=1$, then $height_T(x,z):=n+1$. If $\rel{E}^\structure{T}(z,y)=1$, then $height_T(x,z)=n-1$. Observe that $height_T$ is well-defined because $\structure{T}$ is a 01-tree and that it has its maximum at pairs of the form $(r_T, p_{N+5}^{j_v})$, where $p_{N+5}^{j_v}$ is the rightmost endpoint of  $\structure{P}_{j_v}$, for some $j\in[m]$ and $v\in[3]$.

\begin{lemma}\label{lem:root2root}
For every homomorphism $h\colon \structure{T}\to\structure{H}$, we have that $h(r_T)=r_H$.
\end{lemma}
\begin{proof}
First, observe that, for every arc $(p_{N+5}^{k,j_v},p_{N+5}^{k',j'_{v'}})$ that we added between the ends of some $\structure{P}$ and
  $\structure{P}'$ at the final step of the construction of $\structure{H}$, the preimage of this arc is empty, i.e., for at least one of
  $p_{N+5}^{k,j_v}$ and $p_{N+5}^{k',j'_{v'}}$, there is no vertex in $\structure{T}$ that is mapped to it by $h$. Indeed, for no two vertices $x,x'$ of
  $\structure{T}$, we have that $\rel{E}^\structure{T}(x,x')=\rel{E}^\structure{T}(x',x)=1$. Then, we can assume that the image $h(\structure{T})$ is an induced
  substructure of $\structure{H}$.

For every two elements $x,x'$ of $h(\structure{T})$, it is impossible that $\rel{E}^\structure{H}(x,x')=\rel{E}^\structure{H}(x,x')=1$. Therefore,
  the function $height_T$ is well-defined on $h(\structure{T})$, and also it is preserved by $h$. Denote this function by $height_H$. By the construction of
  $\structure{H}$, the only possible case, when $height_H(x,x') = height_T(r_T, p_{N+5}^{j_v})$ is when $x = r_H$ and $x' = p_{N+5}^{k,j'_{v'}}$, where
  $p_{N+5}^{j_v}$ is the rightmost endpoint of some path $\structure{P}_{j_v}$ of $\structure{T}$ and $p_{N+5}^{k,j'_{v'}}$ is the rightmost endpoint of a
  similar path in $\structure{H}$.
\end{proof}

\begin{proof}[Proof of Theorem~\ref{thm:3sat}]
 Let $f\colon\{x_1,\ldots,x_N\}\to\{0,1\}$ be a valid assignment for the variables of $\varphi$. By Lemma~\ref{lem:psize}, construct in $\polynomial$-time a $01$-tree $\structure{T}$ and a $\star$-graph $\structure{H}$ as above. Take the $j$-th clause and map the corresponding part of $\structure{T}$ to the triple of paths labelled with $f(x_{j_1})f(x_{j_2})f(x_{j_3})$. By  construction, it will be a homomorphism.
 
 Let $h\colon \structure{T}\to \structure{H}$ be a homomorphism. We know, by Lemma~\ref{lem:root2root}, that $h(r_T)=r_H$. By
 construction, for all $i,i'\in[N], \structure{P}_i\to\structure{P}_{i'}$ if and only if $i=i'$. Similar statement holds for
   $\structure{N}_1,\ldots,\structure{N}_8$. Therefore, we know that the part of $\structure{T}$ corresponding to the $j$-th clause is mapped to
 the part of $\structure{H}$ corresponding to the same clause of $\varphi$. Let us now construct a valid assignment $f$ of $\varphi$. We
   first observe, in the construction of $\structure{H}$, that the seven paths associated with a variable of a clause correspond to the $7$ valid assignments of
   that clause. So, define the label of such a path as the value of the corresponding variable in that valid assignment of the clause. Now, observe that for
   each variable $x_i$, and each $j,j'\in [m]$ such that $x_i$ belongs to the $j$-th and $j'$-th clauses, the path in $\structure{T}$ associated with the
   variable $x_i$ within the $j$-th clause is mapped to a path labelled $0$ in $\structure{H}$ if and only if the path in $\structure{T}$ associated with the variable $x_i$
   within the $j'$-th clause is mapped to a path labelled $0$ in $\structure{H}$. Therefore, all the paths corresponding to a variable $x_i$ in $\structure{T}$
   are mapped to paths of the same label: either all to 0 or all to 1. So, we can define $f(x_i)$ to be equal to this unique label. Now, this assignment $f$
   will be valid because as we observed above, whenever a path in $\structure{T}$, corresponding to a variable $x_i$ in a clause $j$, is mapped to a path in
   $\structure{H}$, this path in $\structure{H}$ corresponds to a valid assignment for the clause $j$.
\end{proof}

\begin{remark}
Observe that the $01$-tree $\structure{T}$ that represents a 3-SAT formula also has bounded pathwidth, so the result of Theorem~\ref{thm:3sat} will remain true if we require that all the $01$-trees of $\family{T}$ have bounded pathwidth.
\end{remark}

\section{Conclusion}

We have proposed several generalisations of the Matrix Partition Problems studied by Hell et al.  We have shown that $\MP$ and $\MP_\star$ are
$\polynomial$-time equivalent and we have used this to show that a dichotomy for every class $\MP^{\tilde{\sigma}}$ with $|\tilde{\sigma}|=1$ implies a
dichotomy for $\MP^\sigma$ for every finite $\sigma$.  Despite this, we leave open the question of whether $\MP$ on directed graphs is $\polynomial$-time
equivalent to $\MP^\sigma$, for every finite signature $\sigma$, and, a fortiori, the dichotomy question for $\MP$.  We have introduced the generalisation
$\MP_\varnothing$ as a way to see $\MP$ as a $\CSP$ on ``complete inputs''.  We have also studied the set of minimal
obstructions proposed by Feder et al.~\cite{federhell2007} and have proved that their finiteness coincides with finite duality for $\MP$ and $\MP_\star$
problems.  This, we believe, would allow to characterise the finiteness of minimal obstructions for $\MP$ problems.  Finally, we have shown the difference
between $\MP$ and $\CSP$ with respect to the bounded tree-width input by reducing 3-SAT to $\MP(\family{T},-)$.

\newpage

\appendix

\section{Proof of Proposition~\ref{prop:twotoincomplete}}
\label{apsection:from_3_to_2}
Let $\structure{G}_3$ be an $(\varnothing,\tilde{\sigma})$-structure and $\structure{G}_2$ be a $\star$-graph constructed from it by replacing tuples with values in $\{0,1,\star\}$ with copies of gadgets $\structure{T}^0,\structure{T}^1,\structure{T}^\star$ such that they satisfy the conditions 1--5 from Section \ref{subsec:onetobinary}; and $\varnothing$-valued tuples are just deleted without being replaced by anything.
If there is a homomorphism $h_3\colon\structure{G}_3\to\structure{H}_3$, then, by the condition 3, there is a homomorphism $h_2\colon\structure{G}_2\to\structure{H}_2$ such that $h_3$ is the restriction of $h_2$ onto $H_3$. If there is a homomorphism $h_2\colon\structure{G}_2\to\structure{H}_2$, then, by the condition 1, one can consider the restriction $h_3$ of this map on the set $G_3$, and the codomain of this map will be the set $H_3$. By the condition 3, $h_3$ is a homomorphism from $\structure{G}_3$ to $\structure{H}_3$.

Now, consider any $\star$-graph $\structure{A}$ from the input of $\MP_\star(\structure{H}_2)$. By the condition 2, we can mark, in $\polynomial$-time, all the
elements of $\structure{A}$ that can be mapped only to the elements of $H_3$, denote the set containing them by $A_3$. Let $eq(\cdot,\cdot)$ be the minimal by inclusion equivalence relation on the set $A\setminus A_3$ such that, for $a_0,a_n\in A\setminus A_3$, $eq(a_0,a_n)$ holds if there exists a sequence of elements
$a_0,a_1,\ldots,a_{n-1},a_n\in A\setminus A_3$ such that for every $0\leq i<n$, either $\rel{E}^\structure{A}(a_i,a_{i+1})\not=0$ or
$\rel{E}^\structure{A}(a_{i+1},a_i)\not=0$. For each $eq$-equivalence class $A_a$ (containing an element $a$), let $\structure{A}_a$ be the subgraph of $\structure{A}$ induced by the set
\begin{equation*}
A_a\cup\{b\in A_3 \mid \exists c\in A_a\; \rel{E}^\structure{A}(b,c)\not=0 \text{ or } \rel{E}^\structure{A}(b,c)\not=0\}
\end{equation*}
Call each such $\structure{A}_a$ an \emph{internal component} of $\structure{A}$.
Below we will show that the image of every internal component $\structure{A}_a$ must be
contained in some $\structure{T}^v_{xyz}$.

\begin{claim}
If there is $h\colon \structure{A}_a\to\structure{H}_2$, then $h(A_a)\subseteq T^v_{xyz}$ for some $\structure{T}^v_{xyz}$.
\end{claim}
\begin{proof}[Proof of the claim]
  For every two elements $a_0,a_n$ of $A_a$, there exists a sequence $a_1,\ldots,a_{n-1}$ of elements of $A_a$ such that, for each
  $i\in[n-1]$, one of $\rel{E}^{\structure{A}}(a_i,a_{i+1})$ and $\rel{E}^{\structure{A}}(a_{i+1},a_i)$ is not $0$. 
  By condition 2, all the elements of this sequence are mapped to $H_2\setminus H_3$. Since, for all $i\in[n-1]$, one of $\rel{E}^{\structure{H}_2}(h(a_i),h(a_{i+1}))$ and $\rel{E}^{\structure{H}_2}(h(a_{i+1}),h(a_i))$ is not equal to $0$, we see that $h(a_1),\ldots,h(a_{n-1})$ must belong to the same $T^v_{xyz}$, by condition 4. This implies that $a_0$ and $a_n$ belong to the same $T^v_{xyz}$.
\end{proof}

By the condition 3, for every internal component $\structure{A}_a$, we find in $\polynomial$-time the values $v\in\{0,1,\star\}$ such that $\structure{A}_a$ maps to
$\structure{T}^v$. We are going to assign to each component $\structure{A}_a$ a value from the set $\{\varnothing,0,1,\star,\not\to\}$ by the following rules.
\begin{itemize}
    \item If $\structure{A}_a$ maps to $\structure{T}^v$ for any possible $v$, then we assign ``$\varnothing$'' to it.
    \item If, for all $v\in\{0,1,\star\}$, $\structure{A}_a\not\to\structure{T}^v$, then we assign ``$\not\to$'' to it.
    \item Otherwise, among all $v\in\{0,1,\star\}$ such that $\structure{A}_a\to\structure{T}^v$, we label $\structure{A}_a$ with the smallest possible such ``$v$'' with respect to $\preceq_\star$.
\end{itemize}
If there is at least one ``$\not\to$''-labelled component $\structure{A}_a$, then there is no homomorphism from $\structure{A}$ to $\structure{H}_2$, so further we assume that there are no ``$\not\to$''-labelled subgraphs.
Let $\sim$ be the minimal by inclusion equivalence relation on the set $A_3$ defined as follows. For $a_1,a_2\in A_3$, $a_1\sim a_2$ if there
exist an internal component $\structure{A}_a$ labelled with $v\in\{0,1,\star\}$ such that
\begin{itemize}
    \item both $a_1,a_2$ belong to $\structure{A}_a$, and
    \item there is a homomorphism $h\colon \structure{A}_a\to\structure{T}^v$ such that $h(a_1)=h(a_2)$.
\end{itemize}
By the condition 5, for every $a_1,a_2\in A_3$ and every homomorphism $h\colon \structure{A}\to\structure{H}_2$, $a_1\sim a_2$ implies $h(a_1)=h(a_2)$.

Let us construct a new $\star$-graph
$\structure{B}$ based on $\structure{A}$. First, take the set $B = A_3/\sim$.

For every internal component $\structure{A}_a$ with a label $v\in\{0,1,\star\}$ and for every $w\in\{x,y,z\}$, denote by $X_{a,w}$ the set of all elements of $\structure{A}_a$ which are also in $A_3$ that are mapped to $w$ by every homomorphism $h\colon \structure{A}_a\to\structure{T}^v_{xyz}$. The set $X_{a,w}$ is well-defined by condition 5. However, it may happen that $X_{a,w}$ can be empty. In each such case, we add to $B$ and to $X_{a,w}$ a newly introduced element $b_{a,w}$. The relation $\sim$ is extended on these new elements as equality, i.e., every $b_{a,w}$ is equivalent only to itself.

Say that an internal component $\structure{A}_a$ with label $v\in\{0,1,\star\}$ is \emph{spanned} by $(b_1,b_2,b_3)\in B^3$ if $X_{a,x},X_{a,y}, X_{a,z}$ belong to equivalence classes of $\sim$ associated with $b_1,b_2,b_3$ respectively.

For each $(b_1,b_2,b_3)\in B^3$, define a set $V_{b_1b_2b_3}\subseteq\{0,1,\star\}$ to be the set of all $v$'s, where $v$ is the label of an internal component spanned by $(b_1,b_2,b_3)$.
For each $(b_1,b_2,b_3)$, do the following.
\begin{itemize}
    \item If $V_{b_1b_2b_3}$ is empty, then do nothing.
    \item Otherwise, add a copy of a gadget $\structure{T}^{v_{b_1b_2b_3}}_{b_1b_2b_3}$ such that it contains $b_1,b_2,b_3$ and that $v_{b_1b_2b_3} = \bigvee_{v\in V_{b_1b_2b_3}}v$.
\end{itemize}
It follows from the construction of $\structure{B}$ that $\structure{A}\to\structure{H}_2$ if and only if $\structure{B}\to\structure{H}_2$. Also, one may easily check that $\structure{B}$ can be associated with some $(\varnothing,\tilde{\sigma})$-structure $\structure{B}_3$ such that $\structure{B}\to\structure{H}_2$ if and only if $\structure{B}_3\to\structure{H}_3$. So, we can reduce $\structure{A}$ to such a structure $\structure{B}_3$, and we are done.

\section*{Acknowledgement}
The authors thank Florent Madelaine for the many fruitful discussions on the subject, and the reviewers for all their comments which helped improve the
manuscript. 


\bibliographystyle{alpha} 
\bibliography{revision2.bib}






\end{document}